\documentclass[submission,copyright,creativecommons]{eptcs}
\providecommand{\event}{AiML 2026}
\usepackage{aiml26}

\usepackage{iftex}

\ifpdf
  \usepackage{underscore}         
  \usepackage[T1]{fontenc}        
\else
  \usepackage{breakurl}           
\fi

\title{On Modal Logics of Connectedness in Metric Spaces}
\author{John Harding \qquad\qquad  Ilya Shapirovsky 
\institute{New Mexico State University\\ NM, USA}
}

\newcommand{\titlerunning}{On Modal Logics of Connectedness in Metric Spaces}
\newcommand{\authorrunning}{John Harding, Ilya Shapirovsky}

\hypersetup{
  bookmarksnumbered,
  pdftitle    = {\titlerunning},
  pdfauthor   = {\authorrunning},
  pdfkeywords = {modal logic, distance logic, 
 connected metric space,  connected graph, finite model property} 
}

\usepackage{scalerel}
\usepackage{tikz}
\usepackage{amsfonts,amscd,amsmath,amsthm,amssymb,amsfonts,latexsym,}
\usepackage{xcolor}
\usepackage{enumerate}
\usepackage{manfnt}

\def\AA{\forall}
\def\imp{\to}
\def\toto{\twoheadrightarrow}

\def\vf{\varphi}
\def\mo{\vDash}

\def\emp{\varnothing}

\def\clK{\mathcal{K}}

\def\clK{\mathcal{K}}

\def\Log{\logicts{Log}}

\newcommand\logicts[1]{\mathrm{#1}}

\def\mR{\mathbb{R}}

\def\mQ{\mathbb{Q}}

\newcommand\LS[1]{\logicname{S#1}}

\theoremstyle{definition}
\newtheorem{problem}{Problem}
\newtheorem*{problem*}{Problem}
\newtheorem*{remark*}{Remark}

\def\EE{\exists}
\def\AA{\forall}

\def\Log{\mathop{\mathrm{Log}}}

\newcommand\logicname[1]{\logicts{#1}}
\newcommand\LMetrAx[1]{\logicname{Metr}{(#1)}}

\def\Fms{\logicname{Fms}}

\def\Di{\lozenge}

\newcommand\alphabetname[1]{{\mathrm #1}}
\def\Alp{\alphabetname{A}}
\def\AlA{\Alp}
\def\Al{\Alp}

\def\AlpB{\alphabetname{B}}
\def\AlB{\AlpB}

\def\Par{\alphabetname{A}}

\def\d{d}

\def\dj{\d_{\j}}

\def\ups{{\uparrow}}
\def\ds{{\downarrow}}

\def\eps{\varepsilon}
\newcommand\idx{\mathrm{ind}}

\newcommand\myoper[1]{\mathop{\mathrm{#1}}}
\def\Sub{\myoper{Sub}}

\newcommand\sumvct[1]{\oplus{\vect{#1}}}
\newcommand\vect[1]{{\vec{#1}}}
\def\vct{{\vect{r}}}
\def\sumv{\sumvct{r}}

\def\val{v}

\newcommand{\ff}[1]{\widehat{#1}}
\def\ACon{\logicts{Con}}

\def\cl{\mathrm{c}}
\usepackage{graphicx}

\usetikzlibrary{fit}

\newcommand\ISLater[1]{{\color{gray}~IS: #1}}
\renewcommand\ISLater[1]\empty
\newcommand\hide[1]\empty

\begin{document}
\maketitle
\begin{abstract}


For a positive number $a$, each metric space carries the relation $D_a$ consisting of those pairs that are of distance less than $a$ apart. A space $X$ is said to be {\em $a$-connected}, if the graph $(X,D_a)$ is connected (that is, there is a $D_a$-path between every pair of points in $X$). 
We give a complete axiomatization of $a$-connected metric spaces in the language with a family of distance modalities and  the universal modality. Then we give a complete axiomatization of the logic of connected (in the classical topological sense) metric spaces in the language with the topological modality, universal modality, and a single distance modality.  We also show that these logics have the finite model property.


\end{abstract}

\section{Introduction}
Let $(X,d)$ be a metric space, which we usually refer to as simply $X$. There is a natural way to associate a relational structure, or Kripke frame, with $X$. For each real number $r>0$ define a relation $D_r$ on $X$ by setting
\[
x\,D_r\, y\,\,\Leftrightarrow\,\, d(x,y)<r.
\] 
This allows us to interpret in $X$ modal formulas with the corresponding distance modalities 
$\Di_r$. 
We can also consider the topological closure operation of $X$, 
which allows us to 
interpret in $X$  modal formulas with the closure modality $\Di$.

The situation described is the general setting of \cite{Kutz-Sturm-Suzuki-Wolter-Zakharyaschev2002, Kutz-Sturm-Suzuki-Wolter-Zakharyaschev2003,Wolter2005,KuruczWZ05,Kutz2007}. In this series of papers, numerous results were established related to the axiomatizability and decidability of modal logics of metric spaces with different families of operators. For instance, the axiomatization of the logic of all metric spaces with distance modalities 
$\Di_r$ for each $r>0$ is given by the family of axioms for all $r,s>0$ \cite{Wolter2005}:
\vspace{1ex}

\begin{itemize}
\item[(1)] $p\,\to\,\Di_r p$
\item[(2)] $p\,\to\,\Box_r\Di_rp$
\item[(3)] $\Di_r\Di_s p\to\,\Di_{r+s} p$
\end{itemize}
\vspace{1ex}

\noindent The first axiom says that for a subset $P$ of a metric space, we have $P$ is contained in the set of points of distance less than $r$ from $P$; the second expresses the symmetry of the metric; and the third comes from the triangle inequality. 

In this paper, we are interested in modal logics of connected  metric spaces. More precisely, we consider two forms of connectedness: in the usual topological sense,  and in a weaker sense of connectedness of the graph $(X,D_a)$ for some fixed positive $a$.  We address the latter property 
as {\em $a$-connectedness}, which means that there is a $D_a$-path between every pair of points in $X$. 

It is known that connectedness of a topological space is expressed in the language with the closure and universal modalities by the formula 
\begin{equation}\label{eq:intro-conn}
\EE p \wedge \EE \neg  p \imp \EE (\Di  p\wedge \Di  \neg p)    
\end{equation}
 and  moreover, a complete 
axiomatization of the class of all connected spaces is the logic $\LS{4UC}$,  the extension of $\LS{4}$
with the axioms of universal modality and the above formula \cite{Shehtman99}.  
The axiomatization of connected metric spaces  
in the language of the topological closure $\Di$, universal modality, and distance modalities
is an open problem \cite{Wolter2005,KuruczWZ05}. 
We make partial progress in this direction, providing the following two completeness results. 
\begin{itemize}
    \item For a given positive $a$, we identify an axiomatization  of $a$-connectedness 
in the language without topological closure, with the universal modality, and with any set of distance  modalities 
$\Di_r$ containing $\Di_a$. 
\item 
We give an axiomatization of    topological connectedness in the language of closure $\Di$, universal modality, and a single
distance modality $\Di_r$.
\end{itemize}
We also show that 
these logics have the finite model property and, in the case of a finite language, are decidable.  

The paper is organized as follows.   Syntactic and semantic conventions are given in Section \ref{sec:syn-mod-conn}. In Section \ref{sec:fmp}, we prove the finite model property theorem, one of the ingredients 
of the completeness proofs. In Section \ref{sec:a-conn}, we give the axiomatization of $a$-connected spaces, and the result for topological connectedness is given in Section \ref{sec:top-conn}.

\section{Logics and models}\label{sec:syn-mod-conn} 


\subsection{Language}

Let $\Par$ be a set of positive real numbers ({\em parameters}), which will also be considered as indices of modalities.
An  {\em $\Par$-logic} is a normal modal logic with modalities $\{\Di_a\mid a\in\Par\}\cup\{\EE\}$,
and an  {\em $(\Par,\Di)$-logic} is a logic in this  language endowed with $\Di$,
where
$\EE$ and $\Di$ are two fresh symbols.
We write $\AA$ for the abbreviation $\neg\, \EE\, \neg$.
In relational or topological spaces,  $\EE$ will be interpreted as ``somewhere'', that is, formally, via the universal relation.  
In topological spaces, $\Di$ will be interpreted as the operation of closure. 


For a logic $L$ and a formula $\vf$, 
the smallest logic containing $L\cup\{\vf\}$ is denoted as 
$L+\vf$.

\subsection{Axioms}

Let $\vect{r}=(r_1,\ldots,r_m)$ be a tuple of parameters. 
We write $\Di_{\vct}$ for the compound modality $\Di_{r_1}\ldots \Di_{r_m}$, and  
 $\sumvct{r}$ for  $r_1+\ldots+r_m$.
We put $\sumvct{r}=0$ for the empty $\vect{r}$.

\begin{definition}\label{def:metr-axioms}
We define $\LMetrAx{\Par}$ as the smallest normal $\Par$-logic that, for each $r\in \Par$, contains the formulas
\begin{enumerate}[(i)]
    \item  $p\imp \AA \EE p$, $p\imp\EE p$, $\EE \EE p \imp \EE p$, and   $\Di_r p\imp \EE p$;
     \item $p\imp \Box_r \Di_r p$;
     \item $\Di_{\vct} p\imp \Di_{r} p$ for each tuple $\vct$ over $\Par$  with $\sumv\leq r$.
 \end{enumerate}

\noindent Then set  $\LMetrAx{\Par,\Di }$ to be the smallest normal $(\Par,\Di)$-logic that contains $\LMetrAx{\Par}$ and the formulas 
\begin{enumerate}[(i)]
    \item  $p\imp \Di p$, $\Di \Di p\imp \Di  p$, $\Di p\imp \EE p$, and
    \item $\Di_r\Di p \imp \Di_r p$ for each $r\in \Par$.
 \end{enumerate}
\end{definition}
\begin{remark}
    Note that $\Di p\imp \EE p$ follows from other axioms, provided that $\Al$ is non-empty. 
    For the empty $\Al$, this logic is known as $\LS{4U}$, the logic of all (finite) preorders endowed with the universal relation \cite[Theorem 5.9]{Goranko-Passy}. 
\end{remark}

\subsection{Semantics}

In the following, we recall that our language includes a fixed set $A$ of positive real numbers as parameters.  

\begin{definition}\label{def:MetricFrame}
Let $S=(X,d)$ be a metric space. For $r\in A$, 
let $D_{r}$ be the binary relation on $X$ given by  $(x,y)\in D_{r}$ iff
$d(x,y) < r$.
The {\em metric $\Par$-frame} 
of $(X,d)$ 
is the relational structure
$(X,(D_{r})_{r\in \Alp},X\times X)$.
%
Let $\Log{F_\Alp(S)}$ denote the set of all $\Al$-formulas that are valid in the metric $A$-frame of $S$. The {\em $\Par$-logic} of a class $\clK$ of metric spaces is defined as $\bigcap\{\Log{F_\Alp(S)}\mid S\in \clK\}$.
%
To define the {\em $(\Par,\Di)$-logic of $\clK$}, we additionally interpret $\Di$ as the topological closure.
\end{definition}

In \cite{Wolter2005}, it was shown that $\LMetrAx{\Par,\Di }$ is the $(\Par,\Di)$-logic of the class of all metric spaces.

\begin{remark}
In \cite{Wolter2005}, the axiom system was formally different, since extra conditions were assumed for the set of parameters. The difference is minor.
\end{remark}

\subsection{Expressing connectedness}  
A unimodal frame $F=(X,R)$ is {\em connected}, if for any points $x,y$ in $X$, there is a non-oriented $R$-path from $x$ to $y$, that is: there are $n<\omega$ and points $x_0=x, x_1, \ldots, x_n=y$ such that for each $k<n$, $x_k \,R\, x_{k+1}$ or $x_{k+1}\, R\, x_{k}$.
A frame $F=(X,(R_i)_{I})$ is {\em $a$-connected} for $a\in I$,  if $(X,R_a)$ is connected.

%

The  property of  $a$-connectedness is  expressed by the formula 
\begin{equation}\label{eq:conn-general} 
\EE p \wedge \EE \neg  p \imp \EE ((p\wedge \Di_a \neg p)\vee  (\neg p\wedge \Di_a p))     
\end{equation}
Here $\EE$ is interpreted as the modality of the universal relation $X\times X$. 
That is, we have:
$$
(X,(R_i)_{I}) \text{ is  $a$-connected} \text{ iff } 
\text{$\ACon_a$ is valid in $(X,(R_i)_{I},X\times X)$.}
$$ 
The proof is straightforward.  

For a symmetric $R_a$, \eqref{eq:conn-general} simplifies to
$$
\EE p \wedge \EE \neg  p \imp \EE (p\wedge \Di_a \neg p).
$$
This formula will be denoted by $\ACon_a$.

For the case when $R_a$ is reflexive,
\eqref{eq:conn-general} 
simplifies to
\begin{equation}\label{eq:conn-top} 
\EE p \wedge \EE \neg  p \imp \EE (\Di_a p\wedge \Di_a \neg p).    
\end{equation}

\smallskip
If the modality is interpreted as the closure in a topological space, 
\eqref{eq:conn-top}  expresses  topological connectedness  \cite{Shehtman99}; 
in this case, we denote this formula as $\ACon_\Di$.\footnote{More ways to represent topological connectedness in propositional languages are discussed in \cite{RCC-2008,RCC-2010}. }

For the empty $\Al$, the logic $\LMetrAx{\Par,\Di }+\ACon_\Di$  is known as $\LS{4UC}$. 
By \cite[Theorem 10]{Shehtman99}, this logic has the finite model property: it is characterized 
by the class of finite connected preorders endowed with the universal relation. 
Moreover, $\LS{4UC}$  is the logic of all connected topological spaces, and in fact is 
characterized by any connected 
dense-in-itself
separable metric space  
(i.e., any connected metric space with more than one point)  \cite[Theorem 18]{Shehtman99}.

\section{The finite model property of logics of connectedness}\label{sec:fmp}

In this section, we consider logics in an abstract setting, and their models are  not assumed
to be related to any geometric spaces -- they are just relational structures.

\subsection{Filtration lemma}

In this subsection, we consider  models, formulas, and logics in a fixed  modal alphabet. 



For a set of formulas $\Gamma$, let $\Sub \Gamma$ be the set of subformulas of formulas occurring in $\Gamma$. If $\Gamma=\Sub \Gamma$, then $\Gamma$ is said to be {\em $\Sub$-closed}. For a model $M$ and a set $\Gamma$  of formulas, let  $\sim_\Gamma$ be the equivalence on $M$ induced by $\Gamma$:
\begin{center}
$x\sim_{\Gamma} y$ \quad iff \quad$\forall \psi\in\Gamma\; (M,x\models \psi \text{ iff } M,y\models \psi)$.
\end{center} 
We next recall the definition of the minimal and maximal filtrations of a relation.


\newcommand\maxR[1]{R^{\Gamma}_{{#1},\sim}}
\newcommand\maxRDel[1]{R^{\Delta}_{{#1},\sim}}

\begin{definition}\label{def:epi}

Suppose $\Gamma$ is a $\Sub$-closed set of formulas,
$M=(X,(R_a)_\Alp,\val)$ is a model, and $\sim$ is an equivalence relation that refines $\sim_\Gamma$. Set $\ff{X}=X/{\sim}$. 
For a relation $R$ corresponding to a modality $\Di$, we 
define the {\em minimal filtered relation}  $R_{\sim}$ and the \emph{maximal filtered relation} 
$R^{\Gamma}_\sim$
on $\hat{X}$ by setting 
\begin{align*}
\,\,\,\,[x]\,{R}_{\sim}\,[y]\,\,  & \quad \text{ iff }\quad  \mbox{there exist $x'\sim x$ and $y'\sim y$ with } x'\,R\,y'\\
[x]\,R^{\Gamma}_\sim\,[y]\,\,  &\quad \text{ iff } \quad  \mbox{for all $\psi$ with $\Di  \psi\in \Gamma$, if $M,y\models \psi$ then $M,x\models \Di \psi$}.
\end{align*}
\end{definition}


It is easy to check that $R_{\sim}$ is contained in  $R^{\Gamma}_\sim$. 

\begin{definition}
\label{defn: filtration}
A {\em filtration of $M$ through $\Gamma$} is a model $\ff{M}=(\ff{X},(\ff{R}_a)_\Alp,\ff{\val})$ such that
\begin{enumerate}[(i)]
\item $\ff{X}=X/{\sim}$ for an equivalence relation $\sim$, which refines $\sim_\Gamma$;
\item  $\ff{\val}(p)=\{[x]\in \ff{X} \mid x \in\val(p)\}$ for  all variables $p\in \Gamma$;
\item For each $a\in \Al$,  $R_{a,\sim} \subseteq \ff{R}_a  \subseteq  \maxR{a}$.
\hide{
, where
$$
\begin{array}{ccl}
~[x]\,{R}_{a,\sim}\,[y] & \text{iff} & \exists x'\sim x\ \exists y'\sim y\;
(x'\,R_a\,y'),
\smallskip \\
~[x]\,\maxR{a}\,[y] & \text{iff} & \forall \psi\;
   (\Di_a \psi\in \Gamma \: \& \: M,y\models \psi \Rightarrow M,x\models \Di_a \psi ).
\end{array}
$$}
\end{enumerate}
\end{definition}



The following fact is standard.
\begin{lemma}[Filtration lemma]\label{Lemma:Filtration} Suppose that $\Gamma$ is a  $\Sub$-closed set of formulas and $\ff{M}$ is a $\Gamma$-filtration of a model~$M$. Then, for all points $x$ in $M$ and all formulas ${\psi \in\Gamma}$, we have:
\begin{center}
${M,x\models \psi}\quad$ iff $\quad{\ff{M},[x]\models\psi}$.
\end{center}
\end{lemma}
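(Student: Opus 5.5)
We prove the statement by induction on the construction of $\psi\in\Gamma$, simultaneously for all points $x$ of $M$. Since $\Gamma$ is $\Sub$-closed, every subformula of a formula in $\Gamma$ also lies in $\Gamma$, so the induction hypothesis is available for all immediate subformulas. We read the definition with the convention that the universal modality $\EE$, when present, is one of the modalities indexed in the alphabet, so clause (iii) of the definition of filtration constrains it as well. For the base case, let $p\in\Gamma$ be a variable. By clause (ii), $[x]\in\ff{\val}(p)$ iff some $x'\sim x$ satisfies $x'\in\val(p)$. Because $\sim$ refines $\sim_\Gamma$ and $p\in\Gamma$, this holds iff $x\in\val(p)$. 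The Boolean cases ($\bot$, $\neg$, $\wedge$, and so on) follow immediately from the induction hypothesis.

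The only real work is the modal case $\psi=\Di\chi$, where $\Di$ has relation $R$ in $M$ and $\ff{R}$ in $\ff{M}$.

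\emph{Left to right.} Suppose $M,x\models\Di\chi$. Then there is a $y$ with $x\,R\,y$ and $M,y\models\chi$. By the definition of the minimal filtered relation, $[x]\,R_\sim\,[y]$, and clause (iii) gives $[x]\,\ff{R}\,[y]$. Since $\chi\in\Gamma$, the induction hypothesis yields $\ff{M},[y]\models\chi$, and therefore $\ff{M},[x]\models\Di\chi$.

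\emph{Right to left.} Suppose $[x]\,\ff{R}\,[y]$ and $\ff{M},[y]\models\chi$. By the induction hypothesis, $M,y\models\chi$. By clause (iii), $[x]\,R^\Gamma_\sim\,[y]$. Since $\Di\chi\in\Gamma$, the definition of $R^\Gamma_\sim$ gives $M,x\models\Di\chi$.

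The one point that needs care, and where I expect any obstacle to lie, is that the relations $R_\sim$ and $R^\Gamma_\sim$ are defined on classes via representatives, so we must check that they are well defined. For $R_\sim$ this is built into the existential quantification over $x'\sim x$ and $y'\sim y$. For $R^\Gamma_\sim$, the defining condition involves only formulas $\psi$ and $\Di\psi$ lying in $\Gamma$. Their truth values are constant on $\sim_\Gamma$-classes, hence on $\sim$-classes because $\sim$ refines $\sim_\Gamma$, so the condition does not depend on the chosen representatives. The same observation justifies passing between $x$ and its representatives in the base case. This completes the induction.
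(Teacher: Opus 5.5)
Your proof is correct and takes the same approach as the paper, whose proof is simply ``Induction on $\psi$''. You spell out that induction in the standard way: clause (ii) together with $\sim\subseteq\sim_\Gamma$ handles variables, the minimal filtered relation gives the forward direction of the modal case, and the maximal filtered relation gives the backward direction.
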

\begin{proof}
Induction on $\psi$.
\end{proof}

For a relation $R$ on $X$,
let $R^+$ be its transitive closure $\bigcup_{0<i<\omega}R^i$.
The following is well known. 

\begin{lemma}(\cite[Lemma 2]{Seg1968-S4.1})\label{lemma:poly:transFiltr}
 Let $M\mo \Di_a \Di_a \psi \imp \Di_a \psi $ for each formula $\psi$,  and let
$\sim$ be the equivalence induced by a $\Sub$-closed $\Gamma$ on $M$.
Then $(R_{a,\sim})^+$ is contained in the maximal filtered relation $\maxR{a}$.
\end{lemma}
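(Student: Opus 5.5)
Since $\maxR{a}$ is defined by a universal condition on formulas, the plan is to show that this condition is preserved along finite chains of $R_{a,\sim}$-steps. Concretely, I will show by induction on $n\geq 1$ that $(R_{a,\sim})^n\subseteq \maxR{a}$. The base case $n=1$ is the already noted inclusion $R_{a,\sim}\subseteq\maxR{a}$. It holds because $x\sim x'$, $x'R_a y'$ and $y'\sim y$ give, for $\Di_a\psi\in\Gamma$ with $M,y\models\psi$, that $M,y'\models\psi$ (since $\psi\in\Gamma$ by $\Sub$-closedness and $\sim=\sim_\Gamma$), hence $M,x'\models\Di_a\psi$, hence $M,x\models\Di_a\psi$ (since $\Di_a\psi\in\Gamma$).

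The key point for the induction is that $\maxR{a}$ is closed under composition with $R_{a,\sim}$ on the left, thanks to transitivity of the validated axiom. Suppose $[x]\,R_{a,\sim}\,[z]$ and $[z]\,(R_{a,\sim})^{n}\,[y]$. By the induction hypothesis, $[z]\,\maxR{a}\,[y]$. Now take $\psi$ with $\Di_a\psi\in\Gamma$ and $M,y\models\psi$. Then $M,z\models\Di_a\psi$. Choose witnesses $x'\sim x$ and $z'\sim z$ with $x'R_a z'$. Since $\Di_a\psi\in\Gamma$ and $z'\sim_\Gamma z$, we get $M,z'\models\Di_a\psi$, so $M,x'\models\Di_a\Di_a\psi$. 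The hypothesis $M\mo\Di_a\Di_a\psi\imp\Di_a\psi$ then yields $M,x'\models\Di_a\psi$. Since $x\sim_\Gamma x'$ and $\Di_a\psi\in\Gamma$, we conclude $M,x\models\Di_a\psi$. Hence $[x]\,\maxR{a}\,[y]$.

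This gives $(R_{a,\sim})^{n+1}\subseteq\maxR{a}$, and taking the union over $n$ yields $(R_{a,\sim})^+\subseteq\maxR{a}$. There is no real obstacle. The only care needed is that every transfer along $\sim$ is applied to a formula actually in $\Gamma$. The transferred formulas are $\psi$ and $\Di_a\psi$, which lie in $\Gamma$ by $\Sub$-closedness, never $\Di_a\Di_a\psi$. This is exactly why the transitivity axiom is used inside $M$ at the point $x'$ rather than in the quotient. It is also where the assumption that $\sim$ is the equivalence induced by $\Gamma$ itself is used: any refinement of it would also work.
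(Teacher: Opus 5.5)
Your proposal is correct and follows the paper's proof essentially step for step. Both argue by induction on $n$, peel off the first $R_{a,\sim}$-step, use the induction hypothesis to get $\Di_a\psi$ at $z$, transfer it to the witness $z'$, and apply $\Di_a\Di_a\psi\imp\Di_a\psi$ at $x'$ before transferring back to $x$.
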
 
\begin{proof}
By induction on $n$,  we show $(R_{a,\sim})^n\subseteq \maxR{a}$ for all $n>0$. Basis is trivial, since the $a$-th minimal filtered relation is contained in the $a$-th maximal.
For the step, suppose
 $\Di_a \psi \in \Gamma$,  $M,y\mo  \psi$, and
$[x] (R_{a,\sim})^{(n+1)} [y]$. 
Then $[x] R_{a,\sim} [z] (R_{a,\sim})^n [y]$ for some $z$. By the hypothesis,
$M,z\mo \Di_a \psi$.  We have $x'R_a z'$ for some $x'\sim x$ and $z'\sim z$, and since $\Di_a\psi \in \Gamma$,
$M,z'\mo\Di_a\psi$. Thus, $M,x'\mo \Di_a\Di_a\psi$. The formula $\Di_a\Di_a\psi\imp \Di_a\psi$ is true in $M$,  hence $M,x'\mo \Di_a\psi$.  So $M,x\mo \Di_a\psi$, as required. 
\end{proof}
\hide{
\IS{Stronger form:  $\ff{R}_a$ is contained in the Lemmon filtration, then
$(\ff{R}_a)^+$ is contained in
the maximal filtered relation $\maxR{a}$.
(Almost trivial: $(\ff{R}_a)^+$ is contained in Lemmon, since the latter is transitive)}
}

\subsection{Metric closure operations on  frames} 

The formula 
$\Di_c\Di  p \imp \Di_c p$ expresses the property $R_c\circ R \subseteq R_c$.
The following construction will be used to build filtrations with this property. 
Let $R,D$ be relations on $X$. We define recursively a relation $S$ called {\em the $R$-closure of $D$}  as follows: 
 \begin{eqnarray}
  S^{(0)} &=& D\cup D^{-1},\\
 S^{(n+1)}  &=&  (S^{(n)} \circ  R) \cup (S^{(n)}\circ R) ^{-1}, \\
   S  &=&  \bigcup\nolimits_{\omega} S^{(n)}.
 \end{eqnarray}
The relation  $S^{(n)}$ is called the {\em $n$-th grade} of the closure $S$.

\begin{lemma}\label{lem:poly:bcclosure}
$S$ is symmetric and $S\circ R \subseteq S$.
\end{lemma}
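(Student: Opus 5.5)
Both properties follow directly from the recursive definition of the grades $S^{(n)}$, so I will check each one grade by grade.

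\emph{Symmetry.} I will show that every grade $S^{(n)}$ is symmetric. For $n=0$ this holds because $S^{(0)}=D\cup D^{-1}$, and $(D\cup D^{-1})^{-1}=D^{-1}\cup D$. For $n+1$, the grade has the form $T\cup T^{-1}$ with $T=S^{(n)}\circ R$. Any relation of the form $T\cup T^{-1}$ is symmetric, since its inverse is $T^{-1}\cup T$. Symmetry of a union of symmetric relations is immediate, so $S=\bigcup_{n<\omega} S^{(n)}$ is symmetric. This part needs no induction hypothesis, because every grade is symmetric by construction.

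\emph{Closure under $R$.} I will use that composition distributes over arbitrary unions:
\[
S\circ R=\bigcup\nolimits_{n<\omega}\bigl(S^{(n)}\circ R\bigr).
\]
Concretely, suppose $x\,S\,y$ and $y\,R\,z$. Then $x\,S^{(n)}\,y$ for some $n$, and hence $x\,(S^{(n)}\circ R)\,z$. By definition $S^{(n)}\circ R\subseteq S^{(n+1)}$, so $x\,S^{(n+1)}\,z$, and therefore $x\,S\,z$.

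Here I read $S^{(n)}\circ R$ as the relation ``first $S^{(n)}$, then $R$'', matching the convention under which $\Di_c\Di p\imp\Di_c p$ expresses $R_c\circ R\subseteq R_c$.

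I expect no real obstacle. The only point needing care is the composition-order convention. If the paper's $\circ$ meant the opposite order, the same argument still goes through: the clause $S^{(n+1)}\supseteq S^{(n)}\circ R$ is literally the required inclusion at each grade, so the proof does not depend on which reading is used, as long as it is applied consistently.
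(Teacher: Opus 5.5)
Your proof is correct and is essentially the paper's argument: each grade $S^{(n)}$ is symmetric by construction, so their union is symmetric, and any pair in $S\circ R$ already lies in some $S^{(n)}\circ R\subseteq S^{(n+1)}\subseteq S$. Your remark that the argument does not depend on the composition-order convention is also accurate.
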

\begin{proof}
Each $S^{(n)}$ is symmetric, so is $S$.  Assume $(x,y)\in S\circ R $. Then
$(x,y)\in S^{(n)}\circ R $ for some $n$, and so $(x,y)\in S^{(n+1)}\subseteq S$.
\end{proof}


\hide{
HIDDEN COMMENT. IF IT APPEARED IN THE GALLEY PROOF, IT INDICATES THE PROBLEM.
For a real $r$, let
$V_r$ be the set of tuples $\vect{r}$ of parameters with $\sumvct{r}\leq r$,
and let $V=V_{\max \Par}$.\IS{This $V$ needs finiteness of $\AlA$.  Where do we use it? In the proof of filtration theorem. Move this definition.}
}

For a non-empty tuple $\vct=(r_1,\ldots,r_m)$ of parameters and 
a frame $(X, (S_r)_\Par)$, put
$S_\vect{r}=S_{r_1}\circ \ldots \circ S_{r_m}$. 
For the empty $\vect{r}$, define $S_\vect{r}$ as the diagonal $\{(x,x)\mid x\in X\}$.  
%

\begin{definition}
\label{defn: A-closure}
The {\em $\Par$-closure  of a frame $(X, (S_r)_\Par)$} is the frame
$(X, (H_r)_\Par)$ such that 
$$
H_{r}=\bigcup\{S_\vect{r}\mid \text{$\vect{r}$ is a tuple of parameters with $\sumvct{r}\leq r$}\}. 
$$
\end{definition}

\begin{lemma}\label{lem:poly:Pclosure}
The $\Par$-closure
$(X, (H_r)_\Par)$ of any frame $(X,(S_r)_\Par)$ validates the formulas
$\Di_{\vct} p\imp \Di_{r} p$  for each tuple $\vct$ over $\Par$ and $r\in \Par$ with $\sumv\leq r$.
\end{lemma}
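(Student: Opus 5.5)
The statement reduces to a frame condition: the formula $\Di_{\vct} p\imp \Di_{r} p$ is valid in a frame $(X,(H_s)_\Par)$ as soon as $H_{\vct}\subseteq H_r$, where $H_{\vct}=H_{r_1}\circ\ldots\circ H_{r_m}$ (the diagonal for empty $\vct$). Indeed, if $x\models\Di_{\vct}p$ under some valuation, then there is $y$ with $x\,H_{\vct}\,y$ and $y\models p$, and the inclusion gives $x\,H_r\,y$, so $x\models\Di_r p$. So the plan is to prove this inclusion for every tuple $\vct$ over $\Par$ with $\sumv\leq r$.

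Fix $\vct=(r_1,\ldots,r_m)$ with $\sumv\leq r$. First dispose of the empty tuple: $H_{\vct}$ is then the diagonal, and the diagonal is contained in $H_r$ because the empty tuple has sum $0\leq r$, so $S_{()}$ (the diagonal) is one of the relations in the union defining $H_r$.

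For $m\geq 1$, I unfold the definition of each factor. Suppose $x=x_0\,H_{r_1}\,x_1\,H_{r_2}\cdots H_{r_m}\,x_m=y$. By Definition~\ref{defn: A-closure}, for each $k$ there is a tuple $\vect{t}_k$ over $\Par$ with $\oplus\vect{t}_k\leq r_k$ and $(x_{k-1},x_k)\in S_{\vect{t}_k}$. Let $\vect{t}$ be the concatenation $\vect{t}_1\cdots\vect{t}_m$. Since composition is associative and the diagonal is its unit, $S_{\vect{t}}=S_{\vect{t}_1}\circ\cdots\circ S_{\vect{t}_m}$; this also covers empty pieces and the case where $\vect{t}$ itself is empty. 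Hence $(x,y)\in S_{\vect{t}}$. Moreover $\oplus\vect{t}=\sum_k\oplus\vect{t}_k\leq\sum_k r_k=\sumv\leq r$, so $S_{\vect{t}}\subseteq H_r$ and therefore $(x,y)\in H_r$.

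There is no real obstacle. The only point needing care is the bookkeeping with empty tuples: the convention that $S_{()}$ is the diagonal makes the concatenation identity hold uniformly, and it makes the sum-$0$ case fall under the union defining $H_r$.
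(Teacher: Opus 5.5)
Your proposal is correct and follows essentially the same argument as the paper. Both proofs unfold each $H_{r_k}$-step into some $S_{\vec{t}_k}$ with $\oplus\vec{t}_k\leq r_k$, concatenate the tuples to land in $S_{\vec{t}}\subseteq H_r$, and use $\oplus\vec{t}\leq r$. Your explicit treatment of the empty tuple is a small detail the paper leaves implicit: there the formula becomes $p\imp\Di_r p$, and one needs the diagonal inside $H_r$.
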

\begin{proof}
Let $(x,y)\in H_\vct$ for $\vct=(r_1,\ldots,r_m)$.
Then we have $(x,y) \in S_{\vect{s}_1}\circ \ldots \circ S_{\vect{s}_m}$ for some $\vect{s}_i$ with $\sumvct{s}_i\leq r_i$.
Then $(x,y) \in S_\vect{s}$, where $\vect{s}$ is the concatenation $\vect{s}_1\vect{s}_2\ldots \vect{s}_m$.  Hence $\sumvct{s}\leq r$. By the definition of $H_r$, we have $(x,y)\in H_r$, as desired.
\end{proof}

\begin{definition}
The {\em $(\Par,\Di)$-closure  of a frame  $F=(X, R, (D_r)_\Par)$}  is the frame $G=(X,\preceq, (H_r)_\Par,X\times X)$ defined as follows:
\begin{enumerate}[(i)]
\item $\preceq$ is the transitive closure of $R$; 
\item $(X,(H_r)_\Par)$ is the $\Par$-closure of $(X,(S_r)_\Par)$, 
where $S_r$ is the $\preceq$-closure of $D_r$. 
\end{enumerate}
\end{definition}

\begin{lemma}\label{lem:poly-twoclosures}
For a frame $F=(X, R, (D_r)_\Par)$ with all relations reflexive,
its $(\Par,\Di)$-closure validates 
the logic $\LMetrAx{\Par,\Di }$. 
\end{lemma}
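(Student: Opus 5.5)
We verify the axioms of $\LMetrAx{\Par,\Di}$ one group at a time in the $(\Par,\Di)$-closure $G=(X,\preceq,(H_r)_\Par,X\times X)$ of $F$. Each axiom is a Sahlqvist-type formula with an evident first-order correspondent, so it suffices to check the corresponding frame conditions. The universal-modality axioms in (i), namely $p\imp\AA\EE p$, $p\imp\EE p$ and $\EE\EE p\imp\EE p$, together with $\Di_r p\imp\EE p$ and $\Di p\imp\EE p$, hold because $X\times X$ is an equivalence relation containing every relation on $X$. The axioms $p\imp\Di p$ and $\Di\Di p\imp\Di p$ require that $\preceq$ be reflexive and transitive. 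Transitivity holds since $\preceq$ is a transitive closure, and reflexivity follows because $R$ is reflexive and $R\subseteq R^+$.

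The metric axioms reduce to three conditions on the $H_r$. Axiom (iii) of $\LMetrAx{\Par}$ is exactly Lemma~\ref{lem:poly:Pclosure}. For the symmetry axiom $p\imp\Box_r\Di_r p$ we need $H_r$ to be symmetric. By Lemma~\ref{lem:poly:bcclosure}, each $S_s$ is symmetric, being the $\preceq$-closure of $D_s$. For a tuple $\vct=(r_1,\ldots,r_m)$ we have $(S_\vct)^{-1}=S_{r_m}\circ\cdots\circ S_{r_1}=S_{\vct'}$, where $\vct'$ is the reversed tuple and $\oplus\vct'=\oplus\vct$. Hence $H_r$ is a union closed under inverses and is therefore symmetric. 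The empty tuple contributes the diagonal, which is also symmetric. I will also record that $H_r$ is reflexive, since the empty tuple is allowed, although reflexivity is not strictly needed for the listed axioms.

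The remaining axiom is $\Di_r\Di p\imp\Di_r p$, which corresponds to $H_r\circ\preceq\,\subseteq H_r$, and I expect this to be the main point. Take $x\,H_r\,y\preceq z$. Then $(x,y)\in S_\vct$ for some $\vct$ with $\oplus\vct\le r$. If $\vct$ is non-empty, then $(x,y)\in S_{r_1}\circ\cdots\circ S_{r_m}$, and Lemma~\ref{lem:poly:bcclosure} applied to the last factor gives $S_{r_m}\circ\preceq\,\subseteq S_{r_m}$ (the closure there is with respect to $\preceq$). Hence $(x,z)\in S_\vct\subseteq H_r$.

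The delicate case is the empty tuple, where $x=y\preceq z$ and we need $(x,z)\in H_r$. Here reflexivity of $D_r$ becomes essential. For any $s\in\Par$ we have $(x,x)\in D_s\subseteq S_s$, so $(x,z)\in S_s\circ\preceq\,\subseteq S_s$. This gives $(x,z)\in H_r$ provided some $s\in\Par$ satisfies $s\le r$, and taking $s=r$ works because $r\in\Par$. If $\Par$ is empty there are no modalities $\Di_r$, so the axiom is vacuous and only the $\LS{4U}$ part needs checking, which was handled above.

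Finally, the logic validated by a frame is closed under substitution, modus ponens, and necessitation. Therefore the smallest normal logic containing these axioms, $\LMetrAx{\Par,\Di}$, is validated by $G$.
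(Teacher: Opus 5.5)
Your proof is correct and follows the paper's argument essentially step by step. It uses the same reduction of $\Di_r\Di p\imp\Di_r p$ to $H_r\circ\preceq\;\subseteq H_r$ and the same split into two cases: non-empty tuples, where you absorb $\preceq$ into the last factor via Lemma~\ref{lem:poly:bcclosure}, and the empty tuple, where reflexivity of $D_r$ gives $\preceq\;\subseteq S_r$. Your reversed-tuple argument for symmetry of $H_r$ is a more precise version of the paper's one-line justification.

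One side remark needs correcting: reflexivity of $H_r$ \emph{is} needed, because $p\imp\Di_r p$ is the empty-tuple instance of axiom (iii). You already establish it, and Lemma~\ref{lem:poly:Pclosure} covers that instance, so nothing is missing.
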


\begin{proof}
Let $G=(X,\preceq, (H_r)_\Par, X\times X)$ be this closure. The axioms for the modality $\EE$ hold trivially. Since $R$ is reflexive, $\preceq$ is a preorder.
%
That the $\Di_{\vct}\, p\imp \Di_{r} p$ are valid follows from Lemma \ref{lem:poly:Pclosure}.
All $S_r$ are symmetric by Lemma \ref{lem:poly:bcclosure}, hence
each $H_{r}$ is the union of compositions of symmetric relations, and so is symmetric as well.

Let us  check the validity of $\Di_r\Di p\imp \Di_r p$. For this, we  show that $H_r\, \circ \preceq \;\subseteq\; H_r$. In turn, it is enough to show that for a tuple of parameters $\vec{r}$ with $\oplus\,\vec{r}\leq r$ we have  $S_{\vec{r}}\,\circ\preceq\,\,\subseteq\, H_r$. 
For any $t\in A$ we have $S_t\circ{\preceq}$ included in $S_t$ by  Lemma \ref{lem:poly:bcclosure},
and $S_t$ included in  $S_t\circ {\preceq}$ due to reflexivity of $\preceq$.
So we have for any non-empty $\vect{r}=\vect{s}t$:
$$S_\vect{r}\circ {\preceq}\;=\;(S_\vect{s}\circ S_t)\circ {\preceq}\; = \; S_\vect{s}\circ (S_t \circ {\preceq})\;=\; S_\vect{s}\circ  S_t\;=\;  S_\vect{r}.$$
Thus when $\vect{r}$ is non-empty, $S_{\vec{r}}\;\circ\preceq\;\subseteq H_r$. When $\vec{r}$ is empty, $S_{\vec{r}}$ is by definition the identity relation, so we must show $\preceq\;\subseteq H_r$. Since $D_r$ is reflexive, so is $S_r$, hence $\preceq\;\subseteq S_r\,\circ\preceq$. Thus by Lemma~\ref{lem:poly:bcclosure} used again,  $\preceq\;\subseteq S_r$, and clearly $S_r\subseteq H_r$. 
\end{proof}

\subsection{The filtration}




The following is a corollary of 
\cite[Theorem 10]{Shehtman99} and \cite[Lemma 2]{ChromoWallic}.

\begin{lemma}
\label{lemma:poly:connFiltr}
Consider a model $M=(X,(R_i)_I,X\times X,\val)$ and let $\ff{X}= X{/}{\sim_\Delta}$ for a 
 finite 
$\Sub$-closed set of formulas $\Delta$. 
Let $a\in I$, and assume that a relation $\ff{R}$ on $\ff{X}$  includes the minimal 
filtered relation
$R_{a,\sim_\Delta}$. Assume $M\mo L$ for an $\Al$-logic $L$, and 
$L$ contains the formula $\EE p \wedge \EE \neg  p \imp \EE (\Di_a p\wedge \Di_a \neg p)$ or the formula 
$\EE p \wedge \EE \neg  p \imp \EE (p\wedge \Di_a \neg p)$.  
Then the frame $(\ff{X},\ff{R})$ 
is connected.
\end{lemma}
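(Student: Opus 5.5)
We argue by contradiction. Suppose $(\ff{X},\ff{R})$ is not connected. Then $\ff{X}$ splits into two non-empty parts $U$ and $\ff{X}\setminus U$, where $U$ is a union of connected components, so that no $\ff{R}$-edge (in either direction) joins $U$ and its complement. Since $\ff{R}\supseteq R_{a,\sim_\Delta}$, no $R_{a,\sim_\Delta}$-edge joins them either. Unwinding the definition of the minimal filtered relation: whenever $x\,R_a\,y$ in $M$, the classes $[x]$ and $[y]$ lie on the same side of the partition.

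The key step is to make the partition definable. Since $\Delta$ is finite, $\ff{X}$ is finite. Each class $[x]$ is defined by the formula $\chi_{[x]}=\bigwedge\{\psi\in\Delta\mid M,x\mo\psi\}\wedge\bigwedge\{\neg\psi\in\Delta\mid M,x\not\mo\psi\}$, and $M,y\mo\chi_{[x]}$ iff $y\sim_\Delta x$. Put $\theta=\bigvee_{[x]\in U}\chi_{[x]}$. Then $\val(\theta)=\bigcup U$ is a union of $\sim_\Delta$-classes, it is non-empty, and its complement is non-empty. Finiteness of $\Delta$ is needed precisely so that this disjunction is finite.

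Now we substitute $\theta$ for $p$ in whichever connectedness formula $L$ contains. Since $L$ is a normal logic, it is closed under substitution, so the substitution instance is valid in $M$. The premise $\EE\theta\wedge\EE\neg\theta$ holds at every point, because both $\val(\theta)$ and its complement are non-empty and $\EE$ is interpreted by $X\times X$. Hence there is a point $z$ satisfying either $\Di_a\theta\wedge\Di_a\neg\theta$ or $\theta\wedge\Di_a\neg\theta$.

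In the first case, $z\,R_a\,y_1$ with $y_1\in\val(\theta)$ and $z\,R_a\,y_2$ with $y_2\notin\val(\theta)$. The class $[z]$ lies on one side of the partition, but it is $R_{a,\sim_\Delta}$-related to classes on both sides, which is a contradiction. In the second case, $z\in\val(\theta)$ and $z\,R_a\,y$ with $y\notin\val(\theta)$. This gives an $R_{a,\sim_\Delta}$-edge from $U$ to its complement, again a contradiction.

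The argument is routine throughout. The only point requiring care is the definability of $U$ via $\theta$, which relies on the finiteness of $\Delta$ and on $\sim_\Delta$ being exactly the equivalence induced by $\Delta$.
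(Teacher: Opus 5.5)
Your proof is correct and is essentially the paper's argument. Like the paper, you define the union of one side of the partition by a Boolean combination of formulas from $\Delta$, substitute it into whichever connectedness axiom $L$ contains, and read off an $R_{a,\sim_\Delta}$-edge crossing the partition; the only difference is that you argue by contradiction while the paper argues directly. (Two small notational fixes: the negated conjuncts of $\chi_{[x]}$ should be $\{\neg\psi\mid \psi\in\Delta,\ M,x\not\models\psi\}$, and finiteness of $\Delta$ is needed to make each $\chi_{[x]}$ a finite conjunction, not only to make $\theta$ a finite disjunction.)
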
  
\begin{proof}

Assume that $\ff{X}=\ff{Y}\cup\ff{Z}$ for disjoint non-empty $\ff{Y}$ and $\ff{Z}$, $Y=\bigcup \ff{Y}$, $Z=\bigcup \ff{Z}$. We aim to show that  
\begin{equation}\label{eq:conn-lemma}
\EE y\in Y\; \EE z\in Z \;([y]\, \ff{R}\, [z] \text{ or }[z]\, \ff{R}\, [y]).
\end{equation}

Since $\ff{Y}$ consists of $\sim_\Delta$-classes, for some Boolean combination $\gamma$ of 
formulas in $\Delta$ we have for all $y\in X$:
$$M,y\mo \gamma \quad \text{iff} \quad  y\in Y.$$ 

Assume that $L$ contains the formula  $\EE p \wedge \EE \neg  p \imp \EE (\Di_a p\wedge \Di_a \neg p)$. Then it also contains its substitution instance $\EE \gamma \wedge \EE \neg  \gamma \imp \EE (\Di_a \gamma\wedge \Di_a \neg \gamma)$. Since both $Y$ and $Z$ are non-empty, there are points $x_1$, $x_2$, and $x_3$ such that $x_1Rx_2$, $x_1Rx_3$, $M,x_2\mo\gamma$, and $M,x_3\mo\neg \gamma$. Hence, $x_2\in Y$ and $x_3\in Z$. If $x_1\in Y$, put $y=x_1$ and $z=x_3$; otherwise, put $y=x_2$ and $z=x_1$. In either case, we have \eqref{eq:conn-lemma} according to the definition of minimal filtration. 

Now assume that $L$ contains $\EE p \wedge \EE \neg  p \imp \EE (p\wedge \Di_a \neg p)$. Similar reasoning shows that there are points $y,z$ such that $yRz$, $M,y\mo\gamma$, and $M,z\mo\neg \gamma$, and so $y\in Y$ and $z\in Z$. 

Hence, we have \eqref{eq:conn-lemma}, and so $(\ff{X},\ff{R})$ is connected. 
\end{proof}

\begin{lemma}\label{thm:poly:fmp-top-manyD}
Assume $\AlA$ is finite. Let $L=\LMetrAx{\Par}+\ACon_a$ for some $a\in\Par$, or $L=\LMetrAx{\Par,\Di }+\ACon_a$ for some $a\in\Par\cup \{\Di\}$. Then $L$ has the finite model property.
\end{lemma}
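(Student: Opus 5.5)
Given a formula $\vf\notin L$, we take a model $M=(X,(R_r)_\Par,X\times X,\val)$ of $L$ refuting $\vf$ at some point $x_0$; the canonical model restricted to the $\EE$-cluster of a maximal consistent set containing $\neg\vf$ works. We then build a finite frame that validates $L$ and refutes $\vf$. Let $\Delta$ be the finite $\Sub$-closed set consisting of $\Sub\{\vf\}$ together with $\EE$-, $\Di_r$- ($r\in\Par$) and, in the second case, $\Di$-prefixed versions of these formulas as needed. Since $\Par$ is finite, $\Delta$ is finite. Put $\sim=\sim_\Delta$ and $\ff{X}=X/{\sim}$.

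On $\ff{X}$ we start with the minimal filtered relations $R_{r,\sim}$, and $R_\sim$ for $\Di$. These are reflexive because $M$ validates $p\imp\Di_r p$ (which follows from axiom (iii) with the empty tuple) and $p\imp\Di p$. Let $G=(\ff{X},\preceq,(H_r)_\Par,\ff{X}\times\ff{X})$ be the $(\Par,\Di)$-closure of $(\ff{X},R_\sim,(R_{r,\sim})_\Par)$. In the pure $\Par$ case we drop the $\Di$ component, i.e.\ take $\preceq$ to be the identity. By Lemma~\ref{lem:poly-twoclosures}, $G$ validates $\LMetrAx{\Par,\Di}$, and hence also $\LMetrAx{\Par}$. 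Each new relation contains the corresponding minimal filtered relation. The universal relation is a filtration of $X\times X$, since $\EE$ is universal in $M$.

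For connectedness, apply Lemma~\ref{lemma:poly:connFiltr} with $\ff{R}=H_a$ (respectively $\ff{R}={\preceq}$ when $a=\Di$). Since $M\mo\ACon_a$, the frame $(\ff{X},H_a)$ is connected. Because $H_a$ is symmetric (respectively $\preceq$ is a preorder), validity of $\ACon_a$ on $G$ with the universal modality follows from the equivalence stated in Section~\ref{sec:syn-mod-conn}. For $a=\Di$ a connected finite preorder validates $\ACon_\Di$, as used in \cite{Shehtman99}. Thus $G\mo L$.

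The main obstacle is to show that the enlarged relations stay inside the maximal filtered relations, so that $G$ with the induced valuation is still a filtration and Lemma~\ref{Lemma:Filtration} gives $G,[x_0]\not\mo\vf$. The closure operations compose relations, take converses, and iterate, so we must show that each of these preserves inclusion in $R^\Delta_{\cdot,\sim}$. The plan has three parts.
\begin{enumerate}[(a)]
\item First, $\preceq=(R_\sim)^+\subseteq R^\Delta_\sim$ by Lemma~\ref{lemma:poly:transFiltr}, using $\Di\Di p\imp\Di p$.
\item Next, we need the generalization of Lemma~\ref{lemma:poly:transFiltr} to mixed compositions. If $[x]\,(R_{r_1,\sim}\circ\cdots\circ R_{r_m,\sim})\,[y]$ with $\sum r_i\le r$, then $[x]\,R^\Delta_{r,\sim}\,[y]$. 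The argument is by induction on the length of the chain, using the axioms $\Di_{\vct}p\imp\Di_rp$, and similarly $\Di_r\Di p\imp\Di_r p$ for compositions with $\preceq$. Symmetry of the maximal filtration is handled by the axiom $p\imp\Box_r\Di_rp$.
\item The delicate point is that the induction needs the intermediate formulas $\Di_{r'}\psi$, for partial sums $r'$, to lie in $\Delta$. We therefore close $\Delta$ under $\Di_{r'}$ for all $r'\in\Par$ applied to the $\Di_r$-formulas (finitely many, since $\Par$ is finite), and likewise under $\Di$ for the $\Di_r$-arguments. Alternatively, we can work directly in $M$: each grade $S^{(n)}$ and each composition is witnessed by an $R$-chain in $M$ modulo $\sim$, and since $\Delta$ contains $\Di_r\psi$ the truth of $\Di_r\psi$ transfers across $\sim$.
\end{enumerate}
Once the inclusions $H_r\subseteq R^\Delta_{r,\sim}$ and ${\preceq}\subseteq R^\Delta_\sim$ are verified, $G$ is a finite model of $L$ refuting $\vf$, which gives the finite model property.
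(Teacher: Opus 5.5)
You follow the paper's architecture: a canonical model, a finite $\Delta\supseteq\Sub\vf$ inducing $\sim$, the $(\Par,\Di)$-closure of the minimal filtrations, and Lemmas~\ref{lem:poly-twoclosures} and~\ref{lemma:poly:connFiltr}. However, the step you flag as the main obstacle is only planned, and two ingredients of the plan fail.

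First, $\Delta$ is never fixed, and your finiteness argument does not work. Along a chain $[x]\,S_{r_1}\circ\cdots\circ S_{r_m}\,[y]$ the intermediate formulas are the compound formulas $\Di_{r_k}\cdots\Di_{r_m}\psi$. They are not $\Di_{r'}\psi$ for partial sums $r'$, because $r'$ need not be a parameter: with $\Par=\{1,3\}$ and the tuple $(1,1,1)$ there is no $\Di_2$. Also, ``closing $\Delta$ under $\Di_{r'}$-prefixes'' gives infinitely many formulas however small $\Par$ is. What makes $\Delta$ finite is a bound on sums. The paper takes $\Delta=\{\Di_{\vec r}\psi,\ \Di\Di_{\vec r}\psi \mid \psi\in\Gamma,\ \oplus\vec r\le\max\Par\}$. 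This set is $\Sub$-closed and finite, since every such tuple has length at most $\max\Par/\min\Par$. The paper then only needs $H_r$ to lie inside the $\Gamma$-maximal filtration, not the $\Delta$-maximal one.

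Second, converses cannot be handled by ``symmetry of the maximal filtration''. The maximal filtration of a symmetric relation is not symmetric in general, and composition and converse do not in general preserve inclusion in it. The converse half of each grade is $(S_r^{(n)}\circ{\preceq})^{-1}={\succeq}\circ S_r^{(n)}$, i.e. $[y]\,S_r^{(n)}\,[z]\preceq[x]$. Induction gives $M,z\models\Di_r\psi$, and you must push this \emph{up} along $\preceq$ to $x$.

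The missing idea is that $\Di_r$-formulas of $\Delta$ persist along $R$-chains modulo $\sim$. Suppose $x'Rz'$ and $x'D_ru$. Then $u\,D_r\,x'\,R\,z'$, so $uD_rz'$ by the canonical frame condition $D_r\circ R\subseteq D_r$, and $z'D_ru$ by symmetry. Truth of $\Di_r\psi$ transfers across $\sim$ because $\Di_r\psi\in\Delta$. This argument combines the symmetry axiom with $\Di_r\Di p\imp\Di_rp$. Neither alone suffices, and your sketch invokes them only separately.

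With persistence in hand, and with $\Di\chi\in\Delta$ whenever $\Di_r\chi\in\Delta$, the grade induction goes through if you carry the stronger hypothesis $M,y\models\Di\psi$. Forward $\preceq$-steps are absorbed by $\Di_r\Di p\imp\Di_rp$, and backward ones by persistence. Reflexivity of $R$ then lets you specialize to $M,y\models\psi$. The chain over $\vec r$ with $\oplus\vec r\le r$ is then finished by $\Di_{\vec r}p\imp\Di_rp$.
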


\begin{proof}
The proof for the case when $\Di$ is in the language is more general, so we consider this case. 


In the case $\Alp=\emp$, $L$ is the logic $\LS{4UC}$, whose fmp is known \cite{Shehtman99}.  Assume  $\Alp\neq \emp$.

The proof is technical, and we first outline the strategy. Assume that $\vf$ is $L$-consistent. To establish the result, we find a finite model of $L$ in which $\vf$ is satisfiable. (A) Begin with a certain model $M$ of $L$ in which $\vf$ is satisfiable. (B) Take $\Gamma=\Sub\vf$ and $\Delta$ to be a certain finite set of formulas constructed from $\Gamma$. Then form the minimal filtration of $M$ over $\sim_\Delta$. 
(C) Take the $(A,\Di)$-closure of this filtration $\ff{F}$. By Lemma~\ref{lem:poly-twoclosures}, $\ff{F}$ validates $\LMetrAx{\Par,\Di }$. The details of our setup allow us to use  Lemma~\ref{lemma:poly:connFiltr} to obtain that $\ff{F}$ validates $\ACon_a$. Since $\Delta$ is finite, the minimal filtration is finite, hence $\ff{F}$ is finite. 
(D) Define a valuation to construct a model $\ff{M}$ from $\ff{F}$. Then show $\ff{M}$ is a filtration of $M$ through~$\Gamma$, hence by the filtration lemma $\vf$ is satisfible in $\ff{F}$.  
\smallskip

\noindent {\bf Step (A)} Assume $\vf$ is $L$-consistent. Then $\vf\in x_0$ for some $x_0$ in the canonical model of $L$. Let $M=(X,R, (D_r)_{\Par},R_\EE,\val)$ be the submodel of the canonical model of $L$ generated by $x_0$; here $R$ interprets $\Di$, $R_\EE$ interprets the modality $\EE$.
The logic $\LMetrAx{\Par,\Di }$ is canonical since its axioms are Sahlquist formulas. Hence, we have: $R$  is a preorder, all $D_r$  are symmetric and reflexive, and for all $r\in \Par$
\begin{equation}\label{eq:poly:closure-canonGen}
D_r\circ R\subseteq D_r.
\end{equation}
Since $M$ is point-generated,  we have $R_\EE=X\times X$.
\smallskip

\noindent {\bf Step (B)}
Let $\Gamma=\Sub\vf$ and let $V$ be the set of tuples $\vec{r}$ of parameters with $\oplus\vec{r}\leq r$ for some $r\in A$. Define 
\begin{eqnarray*}
\Delta &=&  \{\Di_\vect{r}\,\psi, \Di\Di_\vect{r}\psi \mid \vect{r}\in V , \psi \in \Gamma\}.
\end{eqnarray*}
Note that the empty string $\vec{r}$ has $\oplus\vec{r}=0$, so belongs to $V$, 
and in this case $\Di_{\vec{r}}\psi$ is $\psi$ for any formula. 
Thus $\Gamma\;\subseteq\;\Delta$. 
Put $\ff{X}=X/{\sim}$ for the equivalence $\sim\;=\;\sim_\Delta$ induced by $\Delta$. Then $\sim$ refines $\sim_\Gamma$. Let $R_\sim$ be the minimal filtered relation induced by $R$ on $\ff{X}$, and let $E_r$ denote the   minimal filtered relation $D_{r,\sim}$. 
\smallskip

\noindent {\bf Step (C)}
Let $\ff{F}=(\ff{X},\preceq, (H_r)_\Par,\ff{X}\times \ff{X})$ be the $(\Par,\Di)$-closure of $(\ff{X},R_\sim, (E_r)_\Par)$. Since $R$ and the $D_r$ are reflexive, so are their minimal filtrations $R_\sim$ and the $E_r$,  and  so by Lemma \ref{lem:poly-twoclosures}, $\ff{F}$ validates $\LMetrAx{\Par,\Di }$. 
Since $L$ contains $\ACon_a$, $\ff{F}$ validates $\ACon_a$ by Lemma \ref{lemma:poly:connFiltr}. Hence, $\ff{F}$ validates $L$.

\smallskip

\noindent {\bf Step (D)} Define $\ff{M}=(\ff{F},\ff{v})$, where $\ff{\val}(p)=\{[x]\in \ff{X} \mid x \in\val(p)\}$ for variables $p\in \Gamma$.  
Our aim is to show the following:
\[
\mbox{$\ff{M}=(\ff{X},\preceq, (H_r)_\Par,\ff{X}\times \ff{X},\ff{\val})$ is a filtration of $M=(X,R, (D_r)_{\Par},R_\EE,\val)$ through $\Gamma$.}
\]

By definition, $\ff{X}=X/\sim$, and as noted in Step~(B), $\sim$ refines $\sim_\Gamma$. The valuation $\ff{\val}$ is defined to match the criterion set in Definition~\ref{defn: filtration}. It remains to show that each relation of $\ff{M}$ lies between the minimal and maximal filtered relations. Trivially, $\ff{X}\times \ff{X}$ is the minimal filtered relation of $R_\exists=X\times X$. By definition of the $(A,\Di)$-closure, $\preceq$ is the transitive closure of $R_\sim$. So clearly $\preceq$ contains the minimal filtered relation $R_\sim$. 
Since $L$ contains $\Di\Di p\to \Di p$ and $M\mo L$,  we have $M\models \Di\Di\psi\to\Di\psi$ for any formula $\psi$,
so by Lemma~\ref{lemma:poly:transFiltr}, 
\begin{equation*} 
\text{ $\preceq$ is contained in $R_\sim^{(\Delta)}$.} 
\end{equation*}
But $\Gamma\subseteq\Delta$ gives $R_\sim^\Delta\subseteq R_\sim^\Gamma$.

It remains to show that for each $r\in A$, the relation $H_r$ lies between the minimal and maximal $\sim$-filtrations of $D_r$. Recall that $D_{r,\sim}$ is written $E_r$ and $H_r$ is constructed by first taking the $\preceq$-closure $S_r$ of $E_r$ as given before Lemma~\ref{lem:poly:bcclosure}, then obtaining $H_r$ from the family $(S_r)_\Par$ as in Definition~\ref{defn: A-closure}. 
Note that $S_r$ contains $D_{r,\sim}$ and $S_r\subseteq H_r$ is immediate from the definition of $H_r$. Thus $H_r$ contains the minimal $\sim$-filtration of $D_r$. To show that it is contained in the maximal filtration $D_{r,\sim}^\Gamma$ we must show 
\begin{equation}\label{eq: John}
\text{If $\Di_r\psi\in \Gamma$, $[x]\, H_r\, [y]$, and $M,y\mo \psi$, then $M,x\mo \Di_r\,\psi$.}
\end{equation}
We establish this through a series of results. 

We first claim that for $[x]\preceq [y]$:
\begin{eqnarray}
\label{eq:poly:transFiltrGenJohn}
&&\text{ if $\Di\psi\in \Delta$ and  $M,y\mo   \Di\psi$},  \text{ then $M,x\mo\Di \psi$;}\\
\label{eq:poly:closure-upGenJohn}
&&\text{ if $\Di_r\psi\in \Delta$ and  $M,x\mo   \Di_r\psi$},  \text{ then $M,y\mo  \Di_r\psi$. }
\end{eqnarray}

We have 
\eqref{eq:poly:transFiltrGenJohn} as a corollary of the fact that $\preceq\;\subseteq\;R_\sim^\Delta$. 
Indeed, if $M,y\mo   \Di\psi$,  then $M,z\mo \psi$ for some $z$ with $yRz$. Now $[x] \preceq [z]$, so $[x]\,R_\sim^\Delta\,[z]$, giving $M,x\mo\Di \psi$.

Let us check  \eqref{eq:poly:closure-upGenJohn}. By induction on $n$,  we show that \eqref{eq:poly:closure-upGenJohn}  holds for $[x] (R_\sim)^n [y]$ for all $n\geq 0$. The basis $n=0$ is trivial, since  $x\sim y$ in this case, meaning that $M,x\models\psi$ iff $M,y\models\psi$ for every formula $\psi$ in $\Delta$ and we assumed $\Di_r\psi\in \Delta$. Let $n>0$. We have  $[x]\, R_{\sim}\, [z]\, (R_{\sim})^{n-1} [y]$ for some $z$, and $x'R\, z'$ for some $x'\sim x$ and $z'\sim z$. Then $M,x'\mo \Di_r\psi$, and we have $M,u\mo \psi$ for some $u$ with $x' D_r\, u$. Since $D_r$ is symmetric, we have $u\,D_r\, x'$, and so $u\, D_r\circ R\, z'$. So $u\, D_r\, z'$ by \eqref{eq:poly:closure-canonGen}. Hence, $z' D_r\, u$, and so $M,z'\mo \Di_r \psi$. We have $[z'] (R_{\sim})^{n-1} [y]$, and $M,y\mo \Di_r \psi$ now follows from the induction hypothesis. This completes the proof of  \eqref{eq:poly:closure-upGenJohn}.
    
Now by induction on $n$, we show that for the $n$-grade $S_r^{(n)}$ of $S_r$, we have 
\begin{eqnarray}\label{eq:poly:closureMaxGenJohn}
&&\text{ if $\Di_r \psi\in \Delta$, $[x]\, S_r^{(n)}\, [y]$, and $M,y\mo   \Di\psi$},  \text{ then $M,x\mo\Di_r\psi$.}
\end{eqnarray}

First, observe that $\Di_r \psi\in \Delta$ gives $\Di\psi\in \Delta$. Indeed, if $\Di_r\psi\in \Delta$, then $\Di_r\psi$ has the form $\Di_r\Di_\vect{s}\,\chi$ for some $\chi\in \Gamma$ and $\vect{s}\in V$. Then $\Di\Di_\vect{s}\,\chi\in \Delta$.

Let $[x]\, S_r^{(0)}\,[y]$. We have $S_r^{(0)}=E_r  \cup E_r^{-1}$. Since $E_r$ is symmetric,  $S^{(0)}=E_r=D_{r,\sim}$. 
The assumptions of \eqref{eq:poly:closureMaxGenJohn} give some $x'\sim x$, $y'\sim y$ with $x' D_r\, y'$, and since $\Di\psi\in \Delta$, we have $M,y'\mo \Di\psi$. Hence, $M,x'\mo \Di_r\Di \psi$, and so $M,x'\mo \Di_r \psi$, since $\Di_r\Di \psi\imp \Di_r\psi$ is true in $M$. Thus, $M,x\mo \Di_r \psi$.

Now let $[x]\, S_r^{(n+1)}\,[y]$ and $M,y\models\Di\psi$. Consider two cases. First, assume that $[x]\,  S_r^{(n)} \circ {\preceq}\, [y]$. Then we have that $[x]\,  S_r^{(n)} [z] \; {\preceq}\; [y]$ for some $[z]$. So by \eqref{eq:poly:transFiltrGenJohn} we have $M,z\mo \Di \psi$. By induction hypothesis, we have $M,x\mo \Di_r \psi$. Now assume that $[x]\, (S_r^{(n)} \circ {\preceq})^{-1}\, [y]$, that is $[y]\, S_r^{(n)} [z] \preceq [x]$ for some $z$. Since $S_r^{(n)}$ is symmetric, we have $[z]\, S_r^{(n)}\, [y]$, and so $M,z\mo \Di_r\psi$ by the induction hypothesis. Now \eqref{eq:poly:closure-upGenJohn} implies that $M,x\mo \Di_r\psi$. This completes the proof of \eqref{eq:poly:closureMaxGenJohn}.

Using  \eqref{eq:poly:closureMaxGenJohn}, now we show:
\begin{eqnarray}\label{eq:poly:closureMaxGen1John}
&&\text{ If $\Di_r \psi\in \Delta$, $[x]\, S_r\, [y]$, and $M,y\mo \psi$},  \text{ then $M,x\mo\Di_r\psi$}.
\end{eqnarray}
Since $M,y\mo \psi$ and $R$ is reflexive, $M,y\mo \Di \psi$. We have $[x]\, S_r^{(n)}\,  [y]$ for some $n$, and by \eqref{eq:poly:closureMaxGenJohn}, $M,x\mo \Di_r\psi$, as desired.

Now we claim that  for $\vect{r}\in  V $, we have:
\begin{equation}\label{eq:poly:filtrGenJohn}
\text{If $\psi\in \Gamma$, $[x]\, S_\vect{r}\, [y]$, and $M,y\mo \psi$, then $M,x\mo \Di_\vect{r}\,\psi$.}
\end{equation}
By induction on the length of $\vect{r}$. If $\vect{r}$ is empty, $S_\vect{r}$ is the diagonal, so $[x]=[y]$; also, $\Di_\vect{r}\,\psi$ is just $\psi$; now \eqref{eq:poly:filtrGenJohn} holds because $\psi\in \Gamma\subseteq\Delta$. For the inductive step, suppose that $\vect{r}$ is the concatenation $r\vect{s}$ for a parameter $r$ and a tuple of parameters $\vect{s}$. We have $[x]\,S_{r}\, [z]\, S_\vect{s}\,[y]$ for some $z$. Notice that $\vect{s}\in V$, so by the induction hypothesis, we have $M,z \mo \Di_\vect{s}\,\psi$.  We have $\Di_r\Di_\vect{s}\,\psi\in \Delta$, and by \eqref{eq:poly:closureMaxGen1John}, we get  $M,x\mo \Di_\vect{r}\, \psi$. This completes the proof of  \eqref{eq:poly:filtrGenJohn}.

To establish \eqref{eq: John}, assume that $\Di_r\psi \in \Gamma$, $[x]\, H_r\, [y]$, and $M,y\mo\psi$. We have $[x]\,  S_\vect{r}\,  [y]$ for some tuple of parameters $\vect{r}$ with $\sumvct{r}\leq r$. Also, since $\Gamma$ is Sub-closed, $\psi\in \Gamma$. By \eqref{eq:poly:filtrGenJohn}, $M,x\mo \Di_{\vct}\, \psi$. We have $\sumv\leq r$,  so the formula $\Di_{\vct} \, p\imp \Di_{r} p$ is in $L$. So
$\Di_{\vct}\, \psi\,\imp \Di_{r} \psi$ is true in $M$.  Thus, $M,x\mo \Di_r\psi$, and establishing \eqref{eq: John}.

We have established that $\ff{M}$ is a filtration of $M$ through $\Gamma$. It then follows by the filtration lemma that $\vf$ is satisfiable in $\ff{F}$, completing the proof.  
\end{proof}

\ISLater{Now, to treat 0-connectedness (or even $\Di$?),  let $\Par=\mQ_{>0}$. For a given $\vf$, take the sequence $\Par_n$ with $\min \Par_n \to 0$.
We will get a chain of quotients $\ff{M}_n$, where:
\begin{itemize}
    \item for each small $r$, almost all will be $r$-connected, and the connectedness   will take the first-order form: some fixed (not arbitrary) degree of the filtered $r$-relation is universal;  (for $\Di$, connectedness will not take a first-order form!)
    \item at every $\ff{M}_n$ we will have $\EE x \,\vf^*$  for the first-order translation $\vf^*$ of $\vf$ (same for all $n$);
\item
on  $\ff{M}_n$ we will have a refining sequence of metric relations, each inducing a metric.
\end{itemize}

A ``minor'' step is to merge them all to a model over a 0-connected metric space. Never had a chance to check the most natural candidate... This is definitely deserve an attempt through the weekend.

Some refs:
\begin{itemize}
    \item second answer here: \url{https://mathoverflow.net/questions/270486/on-ultraproducts-of-topological-spaces} and links within
    \item Ultraproduct in topology \url{https://www.mssc.mu.edu/~paul/Paper/ult.pdf}
    (roughly, consider the base as unary predicates, use Los; in some sense, it is quite voluntaristic construction) For connected: see Example 1.6... is it a bad news? See also table on p. 21 for connected; path-connected $T_1$ (does not apply?); see Thm A2.2 and the remark below. Interesting corollary 2.7: there is an ultrapower of any metric X with an ultrametric.
    \item  Ultralimit of metric spaces \url{https://en.wikipedia.org/wiki/Ultralimit}; I do not understand this, a better reference is needed.
    \item Topological ultracoproduct etc \url{https://arxiv.org/pdf/math/9704205}
\end{itemize}

}


\section{The logic of graph-connected spaces}\label{sec:a-conn}

\begin{theorem}\label{thm:r-conn-compl}
Let $\Al$ be a set of positive real numbers, $a\in \Al$. 
The $\Alp$-logic of the class of all $a$-connected metric spaces, and also of all finite $a$-connected metric spaces, is $\LMetrAx{\Alp}+\ACon_a$. 
\end{theorem}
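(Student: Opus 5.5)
We prove soundness and completeness separately, and obtain completeness with respect to finite $a$-connected metric spaces. This gives both claims of Theorem~\ref{thm:r-conn-compl} at once, since finite spaces form a subclass of all $a$-connected spaces.

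\emph{Soundness.} By \cite{Wolter2005}, every metric $\Par$-frame validates $\LMetrAx{\Par}$. For an $a$-connected space $X$, the graph $(X,D_a)$ is connected and $D_a$ is symmetric. So, as observed in the discussion of \eqref{eq:conn-general}, the frame validates $\ACon_a$.

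\emph{Reduction to a finite set of parameters.} Suppose $\vf\notin\LMetrAx{\Par}+\ACon_a$. Let $\Par_0\subseteq\Par$ be the finite set of parameters occurring in $\vf$, together with $a$. Every derivation in $\LMetrAx{\Par_0}+\ACon_a$ is a derivation in $\LMetrAx{\Par}+\ACon_a$, so $\vf\notin\LMetrAx{\Par_0}+\ACon_a$. By Lemma~\ref{thm:poly:fmp-top-manyD} there is a finite model refuting $\vf$ whose frame validates $\LMetrAx{\Par_0}+\ACon_a$. Passing to a point-generated submodel, we may assume the frame is $F=(X,(H_r)_{\Par_0},X\times X)$ with $X$ finite. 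Because the universal modality is interpreted by $X\times X$, the frame conditions apply: each $H_r$ is reflexive and symmetric, and $H_{r_1}\circ\dots\circ H_{r_m}\subseteq H_r$ whenever $r_1+\dots+r_m\le r$. Moreover $(X,H_a)$ is connected. Once we build a metric $d$ on $X$ with $D_r=H_r$ for all $r\in\Par_0$, the metric $\Par$-frame of $(X,d)$ refutes $\vf$, since $\vf$ mentions only modalities from $\Par_0$. Since $D_a=H_a$, the space is also $a$-connected.

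\emph{Constructing the metric.} For distinct $x,y$ with $x\,H_r\,y$ for some $r\in\Par_0$, let $\ell(x,y)=\min\{r\in\Par_0: x\,H_r\,y\}$; this minimum exists and is symmetric in $x,y$. Fix a small $\eps>0$ and give such an edge weight $\ell(x,y)-\eps$. Define $d(x,y)$ as the minimum total weight of an edge path from $x$ to $y$, and put $d(x,x)=0$.
\begin{itemize}
\item[(i)] $d$ is finite, because $H_a$ is connected.
\item[(ii)] $d$ is a metric. Symmetry holds since the weights are symmetric, the triangle inequality holds by construction, and $d(x,y)>0$ for $x\neq y$ as long as $\eps<\min\Par_0$.
\item[(iii)] $H_r\subseteq D_r$. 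If $x\,H_r\,y$ and $x\neq y$, then $d(x,y)\le \ell(x,y)-\eps\le r-\eps<r$.
\item[(iv)] $D_r\subseteq H_r$. This is the main obstacle. Suppose $d(x,y)<r$, and take a minimizing path; we may assume it is simple, so it has length $m\le|X|-1$. Its edges lie in $H_{r_1},\dots,H_{r_m}$ with $\sum r_i-m\eps<r$. Let $\delta>0$ be the least positive value of $\sum r_i-r$ over all tuples over $\Par_0$ of length at most $|X|-1$ and all $r\in\Par_0$. This is a finite set of values, so $\delta$ exists. Choose $\eps<\delta/|X|$ in addition to $\eps<\min\Par_0$. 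Then $\sum r_i-r<m\eps<\delta$, which forces $\sum r_i\le r$. Hence axiom (iii) of Definition~\ref{def:metr-axioms}, valid in $F$, yields $x\,H_r\,y$.
\end{itemize}
This gives $D_r=H_r$ for every $r\in\Par_0$, so $(X,d)$ is a finite $a$-connected metric space refuting $\vf$, which completes the proof.
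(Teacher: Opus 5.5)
Your proof is correct and follows the paper's route. You reduce to the finite alphabet consisting of $a$ and the parameters of $\vf$, use the finite model property (Lemma~\ref{thm:poly:fmp-top-manyD}) to obtain a finite point-generated frame in which $H_a$ is connected, and realize that frame as a metric frame via the shortest-path metric with edge weights $\ell(x,y)-\eps$; this is exactly the construction behind Lemma~\ref{lem:space-from-frame} in the Appendix, with the paper's $\idx(e)$ in place of your $\ell$. The only difference is how $\eps$ is chosen: you bound path lengths by $|X|-1$ using finiteness, whereas the paper bounds them by $2k$ with $k=(\max\Par+1)/\min\Par$, which makes its lemma independent of $|X|$ but is not needed here since the frame is finite.
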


The proof of this theorem is based on two ingredients. The first one is Kripke completeness of the logic, which follows from the finite model property established earlier. The second ingredient is the following lemma, which in fact follows from  \cite{Kutz-Sturm-Suzuki-Wolter-Zakharyaschev2003,Kutz2007}.
\ISLater{Re-check: Kutz-Sturm-Suzuki-Wolter-Zakharyaschev2003 (looks like Lemma 4.12); the Wolter2005 ref; check earlier refs}
\begin{lemma}[Corollary of \cite{Kutz-Sturm-Suzuki-Wolter-Zakharyaschev2003,Kutz2007}]\label{lem:space-from-frame}
Assume that $\Alp$ is finite. Let $F=(X, (R_l)_{l\in \Alp},X\times X)$ be a $\LMetrAx{\Alp}$-frame,
and assume that the frame $(X, (R_l)_{l\in \Alp})$ is point-generated.
Then there exists a metric $d$ on $X$ such that $F$ is the $\Alp$-metric frame of $(X,d)$, and
$\inf\{d(a,b)\mid a\neq b\mbox{ and }a,b\in X\}>0$.
\end{lemma}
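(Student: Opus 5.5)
We construct the metric directly. The plan is to define $d$ as a shortest-path metric with carefully chosen edge weights, so that $d(x,y)<r$ holds exactly when $x\,R_r\,y$ for every $r\in\Al$.

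First, record the frame conditions that $F$ inherits from the Sahlqvist axioms of $\LMetrAx{\Al}$. Each $R_l$ is reflexive (from $\Box_l p\to p$, the dual of $p\to\Di_l p$ via $\sumv$ over the one-element tuple) and symmetric. Moreover $R_{\vect r}\subseteq R_r$ whenever $\sumv\le r$; in particular $R_s\subseteq R_r$ for $s\le r$, and $R_s\circ R_t\subseteq R_r$ whenever $s+t\le r$. Since the universal relation is $X\times X$, the $\EE$-axioms impose nothing further.

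Next, define the metric. Enumerate $\Al=\{a_1<\dots<a_k\}$. For distinct $x,y$, put
\[w(x,y)=\min\{a_i \mid x\,R_{a_{i+1}}\,y\}\]
with the convention $a_0$ replaced by a small positive number; in other words, $w(x,y)$ is the largest value still strictly below the least parameter $r$ with $xR_ry$. To make this precise, choose a small $\varepsilon>0$, less than all gaps $a_{i+1}-a_i$ and less than $a_1$. Set $w(x,y)=b-\varepsilon$, where $b$ is the least $a_i$ with $x R_{a_i} y$. If no such $a_i$ exists, set $w(x,y)=a_k$ (or $\infty$). Then let $d$ be the path metric: the infimum over finite paths of the sums of $w$ along the path.

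Point-generatedness is what guarantees finiteness. Every point is reachable by a finite path through the union of the $R_l$, so $d<\infty$, except that the $\infty$ convention must be handled by capping distances at a large value. Positivity is automatic: every edge has weight at least $a_1-\varepsilon$, which gives the required positive lower bound on $\inf d(a,b)$. Symmetry and the triangle inequality are immediate from the path construction.

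The main obstacle is verifying that $D_r=R_r$. One inclusion is easy: if $xR_ry$ with $r$ minimal among the parameters, then $d(x,y)\le r-\varepsilon<r$. For the converse, suppose $d(x,y)<r$. Then there is a path $x=x_0,\dots,x_m=y$ with total weight below $r$. Each step has $x_jR_{b_j}x_{j+1}$ with weight $b_j-\varepsilon$, so $\sum_j(b_j-\varepsilon)<r$. We want to conclude $\sum b_j\le r$ and then apply the composition axiom to get $xR_ry$. The difficulty is that the accumulated slack $m\varepsilon$ could push $\sum b_j$ above $r$. Because $\Al$ is finite, only finitely many sums of parameters are $\le$ some bound, and these sums have a positive minimal gap to any parameter they do not equal. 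Also, path length is bounded, since each step contributes at least $a_1-\varepsilon$. So choose $\varepsilon$ smaller than $(\text{minimal positive gap } r-\sum b_j \text{ over relevant tuples})/N$, where $N$ is the maximal path length $\lceil a_k/(a_1/2)\rceil$. With this choice, $\sum b_j>r$ forces $\sum(b_j-\varepsilon)\ge r$, and hence $\sum b_j\le r$. Applying axiom (iii) then yields $xR_ry$, and $F$ is exactly the $\Al$-metric frame of $(X,d)$.
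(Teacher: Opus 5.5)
Your proposal is correct and follows essentially the same route as the paper's proof. Both use a shortest-path metric on $(X,\bigcup_l R_l)$ with edge weight equal to the least parameter relating the two points minus $\varepsilon$. Both then bound the number of steps in any path of weight below $\max\Al$, and choose $\varepsilon$ below the minimal positive excess $\sumv-l$ over such short tuples, divided by that length bound, so that axiom (iii) applies. One minor slip: reflexivity comes from axiom (iii) with the empty tuple, not the one-element tuple.
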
 
This statement is only a slight modification of known facts; for the sake of rigor, we provide its proof in the Appendix. 

\ISLater{In the present text we do not need this: $\inf\{d(a,b)\mid a\neq b\,\&\,a,b\in X\}>0$. But let us keep it, just in case - might be useful later.}

\begin{proof}[Proof of theorem \ref{thm:r-conn-compl}.]
Soundness is straightforward. 

To prove the other inclusion, assume that a formula $\vf$ is consistent with the logic $\LMetrAx{\Alp}+\ACon_a$. Let $\AlB$ be the set consisting of $a$ and the parameters occurring in $\vf$, and let $L=\LMetrAx{\AlB}+\ACon_a$. 
Then $\vf$ is consistent with the logic $L$. 
Due to Lemma \ref{thm:poly:fmp-top-manyD}, $L$ has the finite model property and so is Kripke complete. 
Hence, $\vf$ is satisfiable in a (finite) Kripke frame $F=(X, (R_l)_{l\in \AlB},X\times X)$, 
which validates $L$. Due to $a$-connectedness, the frame $(X, (R_l)_{l\in \AlB})$ is point-generated. 
By Lemma \ref{lem:space-from-frame}, $F$ is the $\AlB$-metric frame of a metric space $(X,d)$. 
Clearly, this space is $a$-connected. 
Now consider the expansion $F_\Alp(X,d)$ of $F$.
The formula $\vf$ is satisfiable in $F_\Alp(X,d)$, which completes the proof.
\hide{

...

Due to the Kripke completeness of $\LMetrAx{\Alp}$ (Proposition \ref{prop:metr:canon}), $\vf$ is satisfiable in a $\LMetrAx{\Alp}$-frame $F=(X,(R_l)_{l\in\Alp})$.
Then $\vf$ is satisfiable in the reduct $G=(X,(R_l)_{l\in \AlpB})$ of $F$, where $\AlpB$ is the set of reals $l$ such that $\Di_l$ occurs
in $\vf$.
It follows that $\vf$ is satisfiable in a point-generated subframe $E$ of $G$.
Clearly, $G$ is a $\LMetrAx{\AlpB}$-frame. Hence, $E$ is a $\LMetrAx{\AlpB}$-frame too.
By Lemma \ref{lem:for-completeness}, $E$ is the $\AlpB$-metric frame of a metric space $(X,d)$. Consider the expansion $F_\Alp(X,d)$ of $E$.
The formula $\vf$ is satisfiable in $F_\Alp(X,d)$, which completes the proof.
}
\end{proof}

\begin{remark}
Theorem \ref{thm:r-conn-compl} generalizes the finite model property  of the logic  of $a$-connectedness (Lemma \ref{thm:poly:fmp-top-manyD}) 
for the case of infinite alphabets: indeed, finite metric spaces correspond to finite Kripke frames.   
\end{remark}
\begin{remark}
Theorem \ref{thm:r-conn-compl} does not apply (at least, directly) to the case of topological connectedness due to the following observation. If the closure modality is interpreted via a preorder in a Kripke frame, the corresponding topological space need not even be $T_1$. And hence, this topological closure is not induced by any metric.
This motivates our next section. 
\end{remark}

\ISLater{Conservativity?}

\section{The logic of connected spaces}\label{sec:top-conn}

\def\LCS{\LS{4UC_{<1}}}
\ISLater{
{\color{red} 
In this section, we consider the language with the topological modality, universal modality, and a single distance modality. Our aim is to axiomatize the logic of all connected metric spaces, and also of all connected and compact metric spaces, in this language. A primary tool is the work of Shehtman that uses the order-theoretic notion of a quasi-park to axiomatize the logic of connected topological spaces. We adapt these techniques to 

}
}

In this section, we consider the language with the topological modality, universal modality, and a single distance modality. Our aim is to axiomatize the logic of all connected metric spaces
in this language.

In the next subsection, we list the axioms of the logic, which we denote $\LCS$. 
We then discuss  
three preliminary constructions -- relational and topological (introduced earlier in \cite{Shehtman99}), 
as well as metric; they are given in subsections \ref{sub:quasi}, \ref{sub:cp}, and \ref{sub:jumps}.  
Then, for a given $\LCS$-consistent formula $\vf$, we construct a connected space where $\vf$ is satisfiable.  
This part of the proof is given in the two final subsections: 
we first construct a certain connected subspace $X$ of $\mR^3$; 
then, preserving its connectedness,  we alter its metric to 
ensure   satisfiability of $\vf$.

\subsection{The logic}

We assume that the alphabet $\Al$ of distance modalities is a singleton. 
Without loss of generality, we assume that $\Al=\{1\}$. 
Let $$\LCS=\LMetrAx{\{1\},\Di}+\ACon_\Di.$$
According to Definition \ref{def:metr-axioms}, the explicit set of axioms defining $\LCS$ is the following:
{
\setlength{\jot}{10pt}
\begin{eqnarray}
&&p\imp \AA \EE p,~p\imp\EE p,~\EE \EE p \imp \EE p, \text{ and }  \Di_1 p\imp \EE p;\\
\label{eq:S4UC1:sym}
&&p\imp \Box_1 \Di_1 p;\\
&&p\imp \Di_{1} p;\\
&&p\imp \Di p,~\Di \Di p\imp \Di  p; \\ 
\label{eq:S4UC1:cl}
&&\Di_1\Di p \imp \Di_1 p;\\
&&\EE p \wedge \EE \neg  p \imp \EE (\Di  p\wedge \Di  \neg p),  \text{ that is } \ACon_\Di.
\end{eqnarray}  
}

As we mentioned earlier, the counterpart of $\LCS$ for the empty $\Al$
is the logic $\LS{4UC}$.

\subsection{Quasiparks and suitable frames}\label{sub:quasi}

As usual, a {\em preorder} $(W,\preceq)$ is a set with a transitive reflexive relation. 
We use the following notation:
for $a\in W$ set $\ups a=\{b\mid a\preceq b\}$ and for $U\subseteq W$, set $\ups U =\bigcup_{a\in U}\ups a$. 
Likewise for downsets $\ds a$ and $\ds U$.
A rooted poset $(P,\leq)$ is a {\em (transitive) tree}, if $\ds a$ is a finite chain for all $a$ in $P$. 
A preorder $(W,\preceq)$ is a {\em quasitree}, if its skeleton (the associated partial order) is a tree. 

\begin{proposition}\label{prop:prop-of-suit}
Let $(W,\preceq, S, W\times W)$ be an $\LCS$-frame.  We have: 
\begin{enumerate}
\item $\preceq$ is included in $S$;
\item For $a\in W$, $\ups a$ is an $S$-clique, that is $\ups a\times \ups a\subseteq S$;
\item \label{item3:prop:prop-of-suit} If
$a \preceq a'$, $b \preceq b'$, and  $aSb$, then 
$a'S b'$.   
\end{enumerate}
\end{proposition}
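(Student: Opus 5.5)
The plan is to translate the relevant axioms of $\LCS$ into first-order conditions on the frame and then derive the three items from them. The Sahlqvist correspondents I will use are the following. Reflexivity of $S$ comes from $p\imp \Di_1 p$. Symmetry of $S$ comes from \eqref{eq:S4UC1:sym}. Reflexivity and transitivity of $\preceq$ come from $p\imp\Di p$ and $\Di\Di p\imp\Di p$. Finally, \eqref{eq:S4UC1:cl} gives the inclusion $S\circ{\preceq}\subseteq S$, that is: if $xSy$ and $y\preceq z$, then $xSz$. Validity in the frame gives these conditions directly, since they are standard correspondents. The connectedness axiom and the universal-modality axioms play no role here.

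For item 1, suppose $a\preceq b$. Reflexivity of $S$ gives $aSa$. Applying $S\circ{\preceq}\subseteq S$ to $aSa$ and $a\preceq b$ gives $aSb$.

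For item 3, suppose $a\preceq a'$, $b\preceq b'$ and $aSb$. From $aSb$ and $b\preceq b'$ we get $aSb'$. Symmetry then gives $b'Sa$. From $b'Sa$ and $a\preceq a'$ we get $b'Sa'$, and symmetry again gives $a'Sb'$.

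For item 2, take $b,c\in\ups a$. Since $aSa$ by reflexivity, item 3 applied with $a\preceq b$ and $a\preceq c$ yields $bSc$. So $\ups a$ is an $S$-clique.

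There is no real obstacle in this argument. The only step that needs care is reading the correspondent of $\Di_1\Di p\imp\Di_1 p$ with the correct orientation, namely as $S\circ{\preceq}\subseteq S$. The symmetry of $S$ is what lets us move along $\preceq$ on both sides of an $S$-edge.
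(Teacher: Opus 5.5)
Your proof is correct and takes essentially the same route as the paper. Both arguments extract reflexivity and symmetry of $S$, the preorder properties of $\preceq$, and $S\circ{\preceq}\subseteq S$ from the axioms, and chain them in the same way. The one minor difference is that you obtain item 2 as the special case $a=b$ of item 3, whereas the paper derives it directly from $b\,S\,a\preceq c$.
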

\begin{proof} Axioms~
\eqref{eq:S4UC1:sym} -- \eqref{eq:S4UC1:cl} give that $\preceq$ is a quasi-order, $S$ is reflexive and symmetric, and $S\,\circ\preceq\;\subseteq S$. So clearly we have (1)~ $\preceq\;\subseteq S$. For~(2) suppose $a\preceq b,c$. Then $b\,S\,a\,\preceq\,c$, so $(b,c)\in S\,\circ\preceq\;\subseteq S$. For (3)~assume $a\;\preceq\;a'$, $b\;\preceq\;b'$ and $a\,S\,b$. Then $b\,S\,a\,\preceq\,a'$, so $b\,S\,a'$, and then $a'\,S\,b$. Then $a'\,S\,b\,\preceq\,b'$ gives $a'\,S\,b'$. 
\end{proof}

\def\imm{ \lessdot}
\begin{definition}\cite{Shehtman99}
In a poset $(P,\leq)$, let 
$\imm$ denote the immediate successor relation: $x\,\imm\, y$ iff $x<y$ but $x <z< y$ for no $z$. A {\em non-trivial loop} in $(P,\leq)$ is a non-oriented $\imm\,$-path $(x_1,\ldots ,x_n,x_1)$
(in other words, with $x_i\,\imm\; x_{i+1}$ or $x_{i+1}\,\imm\; x_i$ for each $i\leq n$)
with $n\geq 3$ and all $x_1,\ldots x_n$ distinct.

A finite preorder $(W,\preceq)$ is called a {\em quasipark}, if:
\begin{enumerate}[(i)]
\item 
the skeleton of $(W,\preceq)$ has no non-trivial loops, and 
\item one can order the minimal clusters of  $(W,\preceq)$  as $C_1,\ldots,C_n$ so that 
\begin{equation}\label{eq:quasipark}
    \ups C_i\cap \ups C_{j}\neq\emp \text{ iff }|i-j|\leq 1.
\end{equation}
\end{enumerate}
\end{definition}


The following is straightforward from the definition.    
\begin{proposition}\label{prop:quasipark-basic}
In a  quasipark  $Q=(W,\preceq)$, for each $a,b\in W$ we have:
\begin{enumerate}
    \item The restriction of $Q$ to $\ups a$ is a quasitree; 
    \item If $\ups a\cap \ups b$ is non-empty, then the restriction of $Q$ to this set is a quasitree.
\end{enumerate}
\end{proposition}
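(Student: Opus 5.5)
The plan is to work in the skeleton $P$ of $Q$, the poset of clusters ordered by $\preceq$. A subset of $W$ that is a union of clusters induces a quasitree exactly when its image in $P$ is a tree. So both items reduce to two facts: $\ups$-images are upward closed in $P$, and upsets of a finite poset with no non-trivial loops are forests in which each principal downset is a chain. Everything is finite, so the finiteness requirement in the definition of a tree is automatic. The only real content is that the relevant downsets are chains, plus rootedness.

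For item 1, I would fix a cluster $[a]$ and show that $\ds x\cap\ups[a]$ is a chain for every $x\ge[a]$; then $[a]$ is a root and $\ups[a]$ is a tree. Suppose instead that $y,z\in \ups[a]$ are incomparable and both lie below $x$. Choose maximal $\imm$-chains from $[a]$ up to $y$ and from $y$ up to $x$, and likewise through $z$. The two routes from $[a]$ to $x$ are distinct non-oriented $\imm$-paths. Let $u$ be the last point where they agree before separating, and $v$ the first later point where they meet again. The union of the two segments between $u$ and $v$ is then a cycle with distinct vertices. It has length at least $3$: each segment has length at least $1$, the segments are different, and two distinct $\imm$-chains between comparable elements cannot both have length $1$. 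This contradicts condition (i). The main technical step, and what I expect to be the obstacle, is making this ``first divergence / first reconvergence'' extraction rigorous, so that the resulting closed path really has pairwise distinct vertices and $n\ge3$. I would handle it by taking a minimal counterexample pair of $\imm$-chains with common endpoints.

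For item 2, I would let $U=\ups a\cap\ups b$, which is nonempty and upward closed. Downsets within $U$ are chains by the same loop argument as in item 1, applied inside $\ups a$. The remaining task is rootedness: $U$ must have a least cluster. If $U$ had two distinct minimal clusters $m_1,m_2$, take $\imm$-chains $[a]\to m_1$, $[a]\to m_2$, $[b]\to m_1$ and $[b]\to m_2$. Their union contains a cycle through $[a]$, $m_1$, $[b]$, $m_2$. This cycle is non-trivial: $[a]\ne[b]$, since otherwise $U=\ups a$ is rooted, and $m_1\ne m_2$. The same distinct-vertex extraction then contradicts (i). If $a$ and $b$ are comparable, $U$ is just $\ups$ of the larger one, and item 1 applies.

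Condition (ii) on quasiparks is not needed for this statement. I would note that explicitly and use only the absence of non-trivial loops together with finiteness.
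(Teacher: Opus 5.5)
Your argument for item 1 is sound. Take two maximal $\imm$-chains from $[a]$ to $x$, one through $y$ and one through $z$. They are distinct simple paths in the Hasse graph. Common vertices of two chains are comparable, so your ``last agreement / first reconvergence'' extraction does produce a non-trivial loop. One inaccuracy is in your overview: an arbitrary upset of a loop-free poset need not have chains as principal downsets. The whole skeleton is such an upset, and in any quasipark with $n\ge 2$ the set $\ds D_1$ contains both $C_1$ and $C_2$. What you actually use, correctly, is containment in a principal upset $\ups[a]$.

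The gap is in the rootedness part of item 2. You claim that the union of the four chains contains a non-trivial loop through $[a]$, $m_1$, $[b]$, $m_2$, and you justify this only by $[a]\ne[b]$ and $m_1\ne m_2$. That justification is insufficient, and the claim is false as stated. If $a,b\imm c\imm m_1,m_2$ with $a,b$ incomparable, the four chains form a tree. If $a\imm e\imm m_1,m_2$ and $b\imm m_1,m_2$, a loop exists, but $[a]$ is a leaf of the union and does not lie on it. Your item-1 extraction also does not apply verbatim. The closed walk $[a]\nearrow m_1\searrow[b]\nearrow m_2\searrow[a]$ can backtrack at any of its four turning points, and you must rule out total cancellation.

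The missing idea is to use that $m_1,m_2$ are \emph{minimal} in $U$. Let $v$ be the least common element of your chains $[a]\to m_1$ and $[b]\to m_1$. It exists because the common elements lie on a chain and include $m_1$. Then $v\in U$ and $v\le m_1$, so $v=m_1$. Hence $\pi_1=[a]\nearrow m_1\searrow[b]$ is a simple path, and likewise $\pi_2$ through $m_2$. They are distinct, because every vertex of $\pi_2$ is $\le m_2$ while $m_1\not\le m_2$. Now apply the general, non-monotone form of your extraction to these two simple paths with common endpoints: take the first vertex of $\pi_1$ after the point of divergence that lies on the remaining part of $\pi_2$. This gives a cycle of length at least $3$.

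Alternatively, argue directly, without contradiction. The Hasse graph is a forest. Cancelling backtracks in $[a]\nearrow m\searrow[b]$, which can only occur at the peak, therefore yields the unique simple path from $[a]$ to $[b]$. Its peak $c$ does not depend on $m\in U$, satisfies $c\ge[a],[b]$ and $c\le m$, and so is the least element of $U$.
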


\begin{definition}
An $\LCS$-frame $(W,\preceq, S, W\times W)$ is said to be {\em suitable}, 
if $(W,\preceq)$  is a quasipark.
\end{definition}

It is known that $\LS{4UC}$ is characterized by quasiparks (endowed with the universal relation) 
\cite[Theorem 14]{Shehtman99}.
Together with the   finite model property of $\LCS$, this results in the following fact.
\begin{corollary}\label{cor:suitable}
  $\LCS$ is the logic of suitable frames.  
\end{corollary}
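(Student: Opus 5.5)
We have to show that every $\LCS$-consistent formula $\vf$ is satisfiable in a suitable frame; soundness is immediate, since suitable frames are $\LCS$-frames by definition. The plan has three steps: get a finite model from the finite model property, extract from it a finite model of $\LS{4UC}$, and then unravel that model into a quasipark while carrying the distance relation $S$ along.

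First, by Lemma~\ref{thm:poly:fmp-top-manyD} with $\Alp=\{1\}$, $a=\Di$, the logic $\LCS$ has the fmp. So $\vf$ is satisfied in a finite $\LCS$-frame $F=(W,\preceq,S,W\times W)$ under some valuation. The reduct $(W,\preceq,W\times W)$ is a finite $\LS{4UC}$-frame, i.e.\ a finite connected preorder. Shehtman's proof of \cite[Theorem 14]{Shehtman99} gives a p-morphism $f\colon(Q,\preceq_Q)\to(W,\preceq)$ from a quasipark onto this preorder; it is surjective and respects the universal relation. I would check that his construction (unravelling into quasitrees and gluing consecutive trees along common cones) does yield a genuine p-morphism onto a finite connected preorder. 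This is the content of his Theorem~14.

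Second, I lift $S$ to $Q$ by setting $x\,S_Q\,y$ iff $f(x)\,S\,f(y)$. The axioms must still hold in $(Q,\preceq_Q,S_Q,Q\times Q)$:
\begin{itemize}
\item $S_Q$ is reflexive and symmetric because $S$ is.
\item For $S_Q\circ\preceq_Q\subseteq S_Q$: if $x\,S_Q\,y\preceq_Q z$, then $f(x)\,S\,f(y)\preceq f(z)$, so $f(x)\,S\,f(z)$.
\item $\ACon_\Di$ holds since $(Q,\preceq_Q)$ is a connected quasipark.
\end{itemize}
Hence the lifted frame is a suitable $\LCS$-frame.

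Third, I check that $f$ is a p-morphism for $S_Q$ as well. The forth condition is immediate from the definition of $S_Q$. For the back condition, given $f(x)\,S\,w$, surjectivity of $f$ gives some $y$ with $f(y)=w$, and then $x\,S_Q\,y$. The universal relation is handled by surjectivity in the same way. So $f$ is a surjective p-morphism of the full frame, and $\vf$ is satisfied in the suitable frame under the pulled-back valuation.

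The main obstacle is making sure the $\LS{4UC}$ quasipark result really delivers a p-morphism onto the given finite connected preorder, not just some quasipark satisfying $\vf$. If Shehtman only gives the weaker statement, I would redo his unravelling directly on $(W,\preceq)$: take, along an undirected path through the minimal clusters, the quasitree unravellings of the cones, and glue consecutive ones at copies of common successors.
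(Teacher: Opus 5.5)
Your proposal is correct and matches the paper's proof essentially step for step: the finite model property, a p-morphism from a quasipark onto the finite connected preorder, pulling $S$ back along $f$, and checking $S\circ\preceq\subseteq S$. The p-morphism you were unsure about is exactly what the paper takes from \cite[Lemma 13]{Shehtman99}, so you do not need to redo the unravelling yourself.
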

\begin{proof}
According to the finite model property of $\LCS$ (Lemma \ref{thm:poly:fmp-top-manyD}), 
it is enough  to show that any finite $\LCS$-frame $F=(X,\sqsubseteq, T, X\times X)$ is a p-morphic image of a  suitable $G$.

In view of \cite[Lemma 13]{Shehtman99}, there is a quasipark $(W,\preceq)$ and a p-morphism $f: (W,\preceq) \toto (X,\sqsubseteq)$. 
Set $S=\{(a,b)\in W\times W\mid f(a) T f(b)\}$, $G=(W,\preceq, S, W\times W)$. 
It is immediate that $f:G\toto F$. Also, it is  straightforward that $G$ is an $\LCS$-frame. In particular, the formula $\Di_1\Di p \imp \Di_1 p$ holds there. For this, assume 
that we have $a S b \preceq c$. Then $f(a)Tf(b)\sqsubseteq f(c)$, and hence $f(a) T f(c)$.
By the definition of $S$, $aS c$. 
\end{proof}

\subsection{Morphisms from topological spaces to preorders
%
%
%
}\label{sub:cp}


In a topological space $(X,\tau)$, for $Y\subseteq X$, let $\cl{Y}$ denote the closure of $Y$.

\begin{definition}\cite{Shehtman99}
Let $(X,\tau)$ be a topological space, $(W,\preceq)$ a preorder. 
A surjective $f:X\to W$ is called a {\em cp-morphism},
if for each $a\in W$ we have 
$$
\cl f^{-1}(a)\;=\;f^{-1}[\ds a]
$$
(equivalently, $f$ is an interior map 
from $(X,\tau)$ onto $(W,\tau_\preceq)$, where $\tau_\preceq$ is the topology of upsets in $(W,\preceq)$.)
\end{definition}


The next statement is a corollary of \cite{McKinsey-Tarski-1944}; in  explicit form, it is given in  \cite[Lemma 16]{Shehtman99}.
\begin{lemma}\label{lem:ShehtmanLem16}
If $X$ is a
    connected 
dense-in-itself 
separable metric space and $F$ is a finite quasitree, then $f:X\toto F$ for a cp-morphism $f$.
\end{lemma}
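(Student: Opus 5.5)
We prove the statement by induction on the height of the skeleton of the quasitree $F=(W,\preceq)$, using the McKinsey--Tarski dissection technique. The key topological fact is this: in a connected dense-in-itself separable metric space, every non-empty open set $U$ can be split into finitely many pieces. There is a closed nowhere dense set $B\subseteq U$, and $U\setminus B$ is a disjoint union of any prescribed finite number $k\geq 1$ of non-empty open sets $U_1,\ldots,U_k$ (regular open, each with boundary in $B$). The splitting can be chosen so that $\cl U_i\cap U\supseteq B$ for each $i$, i.e.\ every point of $B$ is a limit of each $U_i$. Moreover, $B$ itself can be arranged to be dense-in-itself, so that every non-empty relatively open part of $B$ is a limit of every $U_i$.

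Let $C$ be the root cluster of $F$, with points $c_1,\ldots,c_m$, and let $t_1,\ldots,t_k$ be the immediate successor subtrees. Each $F_i=\ups t_i$ is again a quasitree of smaller height. We construct $f$ in three steps.

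\begin{enumerate}
\item Dissect $X$ as above into $B$ and open sets $U_1,\ldots,U_k$, with $B$ nowhere dense, closed, dense-in-itself, and contained in the closure of each $U_i$.
\item Map $B$ onto the cluster $C$ so that each preimage $f^{-1}(c_j)$ is dense in $B$. This is possible because $B$ is a separable dense-in-itself metric space, which we partition into $m$ mutually dense subsets.
\item Map each $U_i$ onto $F_i$ by the induction hypothesis.
\end{enumerate}

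The induction hypothesis requires its domain to be connected, and $U_i$ need not be. We resolve this by strengthening the induction statement: the domain is any non-empty open subset of a dense-in-itself separable metric space, with the connectedness requirement dropped. For a quasitree this is harmless, since a rooted preorder is always connected and no cross-component compatibility is needed. In the base case $F$ is a single cluster, and we partition $X$ into $m$ mutually dense sets.

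Then we check the cp-condition $\cl f^{-1}(a)=f^{-1}[\ds a]$, taking closures inside $X$.
\begin{itemize}
\item For $a\in C$: the set $f^{-1}(a)$ is dense in $B$ and $B$ is closed, so its closure is exactly $B=f^{-1}[\ds a]$.
\item For $a\in F_i$: the closure of $f^{-1}(a)$ within $U_i$ is $f^{-1}[\ds a]\cap U_i$ by induction. Outside $U_i$ the closure can only add points of $\partial U_i\subseteq B$. All of $B$ must be added, since $\ds a$ contains $C$. This requires $f^{-1}(a)$ to accumulate at every point of $B$, not merely $U_i$ doing so.
\end{itemize}

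The last requirement is the main obstacle. We handle it by strengthening the induction once more: the map $U\to F$ is required to have every fibre accumulate at every point of $\partial U$. This holds automatically if $f^{-1}(a)$ is open-dense-ish near the boundary. Concretely, we build the dissection so that each $U_i$ consists of countably many pieces converging to every boundary point. We then require the inductive map to be onto $F_i$ on every one of these pieces, taking a sequence of small open balls accumulating at each point of a countable dense subset of $B$ and mapping each ball onto $F_i$. With this strengthened hypothesis the closure computation goes through, and surjectivity is clear.
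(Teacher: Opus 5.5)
Your overall scheme is the McKinsey--Tarski dissection argument that the paper itself only cites; it gives no proof. That scheme is: induction on height, a nowhere dense dense-in-itself part $B$ sent onto the root cluster, open parts $U_1,\dots,U_k$ sent onto the successor subtrees, and connectedness dropped because a quasitree is rooted. Those parts are fine.

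The gap is at the step you yourself call the main obstacle, and it is not closed. Take your strengthened hypothesis (H): every fibre of $f\colon U\to F$ has every point of $\overline{U}\setminus U$ in its closure. With (H) for the $U_i$, the closure computation inside $U$ does go through, since $B\subseteq \overline{U_i}\setminus U_i$. But you must also re-establish (H) for $U$, and your dissection says nothing about $\overline{U}\setminus U$:
\begin{itemize}
\item for $a\in F_i$ you need every point of $\overline{U}\setminus U$ to be a limit of $U_i$, for every $i$;
\item for $a\in C$, whose fibre lies inside $B$, you need $\overline{U}\setminus U\subseteq \overline{B}$.
\end{itemize}
Neither follows from what you list, and neither is automatic. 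Suppose a point $p$ of the parent's $B$ (so $p\in\overline{U_i}\setminus U_i$) is not a limit of the nowhere dense part $B_i$ of $U_i$. Then for $a$ in the root cluster of $F_i$ we have $p\in f^{-1}[\ds a]$ but $p\notin\overline{f^{-1}(a)}\subseteq\overline{B_i}$, so the cp-condition fails. Hence the induction does not close.

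What you need is a dissection with:
\begin{itemize}
\item $\overline{U_i}\setminus U_i=(\overline{U}\setminus U)\cup B$ for each $i$;
\item $\overline{U}\setminus U\subseteq\overline{B}$;
\item $B$ non-empty, closed in $U$, nowhere dense and dense-in-itself.
\end{itemize}
That is the kind of dissection lemma the McKinsey--Tarski argument rests on, and it is the real topological content here. Even the weaker version you state is asserted rather than proved. Citing is acceptable, as the paper does, but then cite a version carrying these boundary clauses.

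Your ``concrete'' repair does not work as written. Suppose each $U_i$ were literally a disjoint union of open pieces $P_{i,j}$ such that every neighbourhood of every point of $B$ contains a whole piece $P_{i,j}$, for each $i$. Then applying the plain (unstrengthened) induction hypothesis to every piece would give the cp-condition with no boundary clause at all, and that would be a legitimate alternative route. But ``taking a sequence of small open balls accumulating at a dense subset of $B$ and mapping each ball onto $F_i$'' leaves the rest of $U_i$ unassigned. Moreover, points of $U_i$ on the boundaries of those balls lie in closures of fibres you do not control, which is the original problem moved inside $U_i$.

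To make this route work you must construct $B$ so that the small pieces exhaust $U\setminus B$. This takes a genuine construction: in the plane, for instance, the complement of a Cantor-type set is connected, so $B$ would have to be carpet-like. ``Open-dense-ish near the boundary'' is not a condition that can be checked. Either route requires a precisely stated and proved (or correctly cited) decomposition lemma, and your proposal supplies neither.
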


\subsection{Jumps}\label{sub:jumps}

\def\j{\mathtt{j}}
\def\j{\iota}
\def\j{\mathtt{J}}
\def\j{J}

Here we introduce a notion of jumps -- a geometric tool to produce from a metric space $(X,d)$ a new metric space $(X,d_\j)$
that is topologically equivalent, 
but reduces distances  by making ``wormholes'' in space.

\begin{definition}\label{def:jumps}
Let $(X,\d)$ be a metric space, $E\subseteq X\times X$, and $\j:E\to\mR_{>0}$ a map that is symmetric in that if $(x,y)\in E$, then $(y,x)\in E$ and $J(x,y)=J(y,x)$.
Call $\j(x,y)$ the  {\em jump between } $x$ and $y$. 
For $x,y\in X$, define the {\em weight of $(x,y)$}: if $(x,y)\in E$ and $\j(x,y)<\d(x,y)$, put $w(x,y)=\j(x,y)$; otherwise, put $w(x,y)=\d(x,y)$. A {\em path in $X$ from $x$ to $y$} is a finite non-empty tuple $\tau=(x_0,\ldots,x_m)$ of elements of $X$ with $x_0=x$ and $x_m=y$.
Put $w(\tau)=\sum_{i<m} w(x_i,x_{i+1})$.
For $x,y \in X$, define
\begin{equation}\label{eq:distance-for-jumps}
\d_{\j}(x,y)=\inf\{w(\tau) \mid \tau\text{ is a path from $x$ to $y$}\}.
\end{equation}
\end{definition}

\begin{lemma}\label{lem:JumpsTop}
If   $\inf\{\j(x,y)\mid (x,y)\in E\}>0$, then $d_\j$ is a metric on $X$ and the 
topologies on $X$ induced by $\d$ and $\d_{\j}$ are equal.
\end{lemma}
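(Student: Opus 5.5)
Let $\delta=\inf\{\j(x,y)\mid (x,y)\in E\}>0$. The plan is to verify the metric axioms for $\d_\j$ directly from \eqref{eq:distance-for-jumps}, and then compare $\d_\j$-balls with $\d$-balls of radius less than $\delta$. On such small scales no jump is cheap enough to matter, so $\d_\j$ should coincide with $\d$ locally.

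First the easy metric axioms. Symmetry of $\d_\j$ follows because $w$ is symmetric (both $\d$ and $\j$ are) and reversing a path preserves its weight. The triangle inequality follows by concatenating a path from $x$ to $y$ with a path from $y$ to $z$ and taking infima. We also have $\d_\j(x,x)=0$ via the one-point path, and $\d_\j\le \d$ via the two-point path $(x,y)$, since $w(x,y)\le \d(x,y)$ always.

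The key step is the local comparison: for any $x,y$,
\[
\d_\j(x,y)<\delta \;\Longrightarrow\; \d_\j(x,y)=\d(x,y).
\]
To prove it, let $\tau=(x_0,\ldots,x_m)$ be a path from $x$ to $y$ with $w(\tau)<\delta$. Every step has $w(x_i,x_{i+1})\le w(\tau)<\delta$. If some step used a jump, its weight would be $\j(x_i,x_{i+1})\ge\delta$, which is impossible. So every step has weight $w(x_i,x_{i+1})=\d(x_i,x_{i+1})$, and the triangle inequality for $\d$ gives $w(\tau)\ge \d(x,y)$. Taking the infimum over such paths yields $\d_\j(x,y)\ge \d(x,y)$, and combined with $\d_\j\le\d$ we get equality. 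This is where the hypothesis $\delta>0$ is essential, and it is the main point of the proof. It also settles definiteness: if $\d_\j(x,y)=0<\delta$, then $\d(x,y)=0$, so $x=y$.

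Finally, the topologies. For $\eps\le\delta$, the claim gives $B_{\d_\j}(x,\eps)\subseteq B_\d(x,\eps)$. Conversely, $\d_\j\le\d$ gives $B_\d(x,\eps)\subseteq B_{\d_\j}(x,\eps)$ for every $\eps$. Hence for $\eps\le\delta$ the two balls coincide, and these small balls form a neighbourhood base at $x$ for both metrics. Therefore $\d$ and $\d_\j$ induce the same topology on $X$.
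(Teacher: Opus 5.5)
Your proposal is correct and takes the same route as the paper: verify the metric axioms directly from the path definition, then show that balls of radius below $\inf J$ are the same for $d$ and $d_J$. Your explicit local identity ($d_J(x,y)<\delta$ implies $d_J(x,y)=d(x,y)$) is the step the paper's one-line proof leaves implicit. You also correctly note that this identity, and hence the hypothesis $\inf J>0$, is exactly what gives definiteness of $d_J$.
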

\begin{proof}
It is easy to see that $\d_{\j}$ is a metric on $X$; 
in particular, the triangle inequality is straightforward from \eqref{eq:distance-for-jumps}. Also, open balls of radii less than this infimum 
form bases of both these topologies.
\end{proof}

\subsection{Constructing a connected space for $\vf$}\label{sub:spaceX}

Assume that a formula $\vf$ is $\LCS$-satisfiable. By Corollary \ref{cor:suitable}, $\vf$ is satisfiable in a suitable frame $$F=(W,\preceq, S, W\times W).$$ Our goal is to show that $\vf$ is satisfiable in  a connected metric space. 

First, we name parts of $F$.  
Let $C_i$, $1\leq i\leq n$ be the minimal clusters of $W$ in the order satisfying  \eqref{eq:quasipark}. We denote $\ups C_i$ as $W_i$, and the restriction of $(W,\preceq)$ to $W_i$ as $F_i$. Put $V_i=W_i\cap W_{i+1}$ (here $1\leq i<n$). 
By 
Proposition \ref{prop:quasipark-basic},
each $V_i$ has a least cluster  $D_i$. 
Let $d_i$ be some point in $D_i$. Finally, let $Q_i$ be the restriction of  $(W,\preceq)$ to $\ups D_i$.

\begin{figure}[h]
\begin{center}
    \label{fig:frameF}
\begin{tikzpicture}[
    cl/.style={circle, draw, thick, inner sep=0pt, minimum size=15pt},
    every edge/.style={thick},
    scale=0.6,
]

\node[cl, label={[font=\normalsize]below:$C_1$}] (C1) at (-4.5, 0) {};
\node[cl, label={[font=\normalsize]below:$C_2$}] (C2) at (0, 0) {};
\node[cl, label={[font=\normalsize]below:$C_3$}] (C3) at (4.5, 0) {};

\node[cl] (L1) at (-6.5, 2.8) {};
\node[cl] (L2) at (-4.5, 2.8) {};
\node[cl, label={[font=\normalsize]left:$D_1$}] (D1) at (-1.8, 2.8) {};
\node[cl, label={[font=\normalsize]right:$D_2$}] (D2) at (1.8, 2.8) {};
\node[cl] (R1) at (4.5, 2.8) {};
\node[cl] (R2) at (6.5, 2.8) {};

\node[cl] (D1a) at (-2.8, 5.6) {};
\node[cl] (D1b) at (-0.8, 5.6) {};
\node[cl] (D2a) at (0.8, 5.6) {};
\node[cl] (D2b) at (2.8, 5.6) {};

\draw[thick] (C1) -- (L1);
\draw[thick] (C1) -- (L2);
\draw[thick] (C1) -- (D1);
\draw[thick] (C2) -- (D1);
\draw[thick] (C2) -- (D2);
\draw[thick] (C3) -- (D2);
\draw[thick] (C3) -- (R1);
\draw[thick] (C3) -- (R2);

\draw[thick] (D1) -- (D1a);
\draw[thick] (D1) -- (D1b);
\draw[thick] (D2) -- (D2a);
\draw[thick] (D2) -- (D2b);

\end{tikzpicture}      

\vskip 10pt

\begin{tikzpicture}[
    disk/.style={draw, thick, fill=gray!10},
    scale=0.55,
]

\def\xA{-5}   
\def\xB{0}    
\def\xC{5}    
\def\ry{1.8}  
\def\rx{0.5}  

\draw[disk] (\xA,0) ellipse ({\rx} and {\ry});
\draw[disk] (\xB,0) ellipse ({\rx} and {\ry});
\draw[disk] (\xC,0) ellipse ({\rx} and {\ry});

\pgfmathsetmacro{\yAx}{\xA + 0.6*\rx*cos(30)}
\pgfmathsetmacro{\yAy}{0.6*\ry*sin(30)}
\fill (\yAx, \yAy) circle (2pt);
\node[above right, font=\small] at (\yAx, \yAy) {$y_1$};

\pgfmathsetmacro{\zAx}{\xB - 0.6*\rx*cos(20)}
\pgfmathsetmacro{\zAy}{-0.6*\ry*sin(20)}
\fill (\zAx, \zAy) circle (2pt);
\node[below left, font=\small] at (\zAx, \zAy) {$z_1$};

\pgfmathsetmacro{\yBx}{\xB + 0.6*\rx*cos(25)}
\pgfmathsetmacro{\yBy}{0.6*\ry*sin(25)}
\fill (\yBx, \yBy) circle (2pt);
\node[above right, font=\small] at (\yBx, \yBy) {$y_2$};

\pgfmathsetmacro{\zBx}{\xC - 0.6*\rx*cos(15)}
\pgfmathsetmacro{\zBy}{-0.6*\ry*sin(15)}
\fill (\zBx, \zBy) circle (2pt);
\node[below left, font=\small] at (\zBx, \zBy) {$z_2$};

\draw[thick] (\yAx, \yAy) -- (\zAx, \zAy);
\draw[thick] (\yBx, \yBy) -- (\zBx, \zBy);

\node[above, font=\small] at ({(\yAx+\zAx)/2}, {(\yAy+\zAy)/2 + 0.15}) {$H_1$};
\node[above, font=\small] at ({(\yBx+\zBx)/2}, {(\yBy+\zBy)/2 + 0.15}) {$H_2$};

\node[below, font=\normalsize] at (\xA, -\ry - 0.3) {$X_1$};
\node[below, font=\normalsize] at (\xB, -\ry - 0.3) {$X_2$};
\node[below, font=\normalsize] at (\xC, -\ry - 0.3) {$X_3$};

\end{tikzpicture} 

    \caption{A quasipark $(W,\preceq)$ and its preimage $X$}
\end{center}
\end{figure}


\smallskip 
Now we construct a subspace $X$ of $\mR^3$.


Fix a number $L>2$. 
Let $1\leq i\leq n$. In the plane $x= Li$, consider a closed disk $X_i$ centered at $(Li,0,0)$ of radius $\frac{1}{L}$,
and let $\tau_i$ be the standard topology on $X_i$.
Fix a cp-morphism $f_i:(X_i,\tau_i)\toto F_i$;
the existence of $f_i$ follows from  Lemma \ref{lem:ShehtmanLem16}.

Choose  $y_i$ in the $f_i$-preimage of $d_i$,
and $z_i$ in the $f_{i+1}$-preimage of $d_i$.
Define the {\em $i$-th handle} $H_i$ as the open interval between
between $y_i$ and $z_{i}$ on the line through these two points;  the standard topology on $H_i$ is denoted as $\rho_i$.
Fix a cp-morphism $g_i:(H_i,\rho_i)\toto Q_i$ (we use \cite[Lemma 16]{Shehtman99} again). 

Let $X$ be the (disjoint, in fact) union of these disks and handles, $\tau$ the standard topology on $X$.
Observe that $(X,\tau)$ is connected.  
Now let $f \; =\; f_1\cup g_1\cup f_2\cup \ldots \cup g_{n-1}\cup f_n$.

\begin{proposition}\label{prop:cpmorphBasic}
$f:  (X,\tau)\toto (W,\preceq)$  is a cp-morphism.
\end{proposition}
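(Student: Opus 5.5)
We must show that $f$ is surjective and that $\cl f^{-1}(a)=f^{-1}[\ds a]$ for every $a\in W$. Surjectivity is immediate: $W=\bigcup_i W_i$ and each $f_i$ maps onto $F_i$. For the closure condition we work piece by piece. The space $X$ is the disjoint union of the closed disks $X_i$ and the open segments $H_i$. The closure of $H_i$ in $X$ is $H_i\cup\{y_i,z_i\}$, since the segment between $y_i\in X_i$ and $z_i\in X_{i+1}$ meets no other disk. This uses $L>2$: consecutive disks lie in parallel planes at distance $L$, their radii $1/L$ are small, and the segment crosses only the intermediate planes. Hence for any $A\subseteq X$,
\[
\cl A=\bigcup_i \cl_{X_i}(A\cap X_i)\;\cup\;\bigcup_i \cl_{H_i}(A\cap H_i)\;\cup\;\{y_i : y_i\in\cl(A\cap H_i)\}\cup\{z_i: z_i\in \cl(A\cap H_i)\}.
\]
Here $\cl_{X_i}$ and $\cl_{H_i}$ denote closure in the subspaces $\tau_i,\rho_i$. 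The disks are closed in $X$ and each disk is at positive distance from all other pieces, apart from the endpoints of the adjacent handles.

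Apply this with $A=f^{-1}(a)$. The cp-property of $f_i$ gives $\cl_{X_i}(f_i^{-1}(a))=f_i^{-1}[\ds a\cap W_i]$, and that of $g_i$ gives $\cl_{H_i}(g_i^{-1}(a))=g_i^{-1}[\ds a\cap \ups D_i]$. So the main part of $\cl f^{-1}(a)$ equals $f^{-1}[\ds a]$ restricted to the disks and handles, and the inclusion $\supseteq$ is already done. It remains to check that the extra endpoints do no harm. Suppose $y_i\in\cl(g_i^{-1}(a))$. Then $g_i^{-1}(a)\neq\emp$, so $a\in\ups D_i$, i.e.\ $d_i\preceq a$. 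Since $f(y_i)=f_i(y_i)=d_i$, this means $y_i\in f^{-1}[\ds a]$, and the same argument applies to $z_i$. Hence $\cl f^{-1}(a)\subseteq f^{-1}[\ds a]$. This is exactly where the choice of $y_i,z_i$ in the preimages of the least point cluster $D_i$ of $Q_i$ matters.

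The main obstacle is the geometric bookkeeping behind the displayed formula for $\cl A$. We must confirm that a point of a disk can be a limit of handle points only if it is an endpoint of that handle, and that the handles are pairwise disjoint and disjoint from the disks. We must also confirm that the closure within a subspace coincides with the closure in $X$ intersected with that piece, for the disks because they are closed and separated, and for the handles away from their endpoints because they are open segments. We will argue this using the plane separation $x=Li$: the segment $H_i$ has $x$-coordinate strictly between $Li$ and $L(i+1)$, and all disks lie on the planes $x=Lj$.
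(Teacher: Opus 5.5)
Your proof is correct and is essentially the paper's argument: the disks $X_i$ are closed in $X$, and the closure of a handle $H_i$ adds only $y_i,z_i$. The cp-property of each $f_i$ and $g_i$ handles the pieces, and the endpoints cause no trouble because $f(y_i)=f(z_i)=d_i\preceq a$ whenever $a\in Q_i$. The paper packages the inclusion $\cl f^{-1}(a)\subseteq f^{-1}[\ds a]$ as closedness of $f^{-1}[\ds a]$, a finite union of closed pieces, rather than computing $\cl f^{-1}(a)$ piecewise as you do.

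One small correction: disjointness of the pieces and the fact that the closure of $H_i$ in $X$ is $H_i\cup\{y_i,z_i\}$ follow from the $x$-coordinate separation alone, so $L>2$ plays no role here.
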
  
\begin{proof}
Clearly $f$ is onto. Note that any set closed in $X_i$ is closed in $X$ and any set closed in $H_i$ and containing its endpoints is closed in $X$. So for any $a\in W$ we have that $f^{-1}[\ds a]$ is a finite union of closed sets of $X$, so is closed in $X$. Thus $\cl f^{-1}(a)\subseteq f^{-1}[\ds a]$. Conversely, suppose $x\in f^{-1}[\ds a]$ and consider two cases. Case (i) $x\in X_i$. Then $f(x)\in W_i$, and since $W_i$ is an upset, $a\in W_i$. Then $x$ is in the closure of $f_i^{-1}(a)$ in $X_i$, hence it is in the closure of $f^{-1}(a)$ in $X$. Case (ii) $x\in H_i$. Then $a\in Q_i$, and $x$ is in the closure of $g_i^{-1}(a)$ in $H_i$, hence in the closure of $f^{-1}(a)$ in $X$. 
\end{proof}



We say that two points $x,y$ in $X$ are {\em neighbours}, if $\d(x,y)<L$.
\begin{proposition}\label{prop:neighbours}
Let $x,y$ be neighbours. Then:
\begin{enumerate}
\item $\ds f(x)\,\cap \,\ds f(y)\neq \emp$;
\item $f(x)\,S\,  f(y)$;
\item $\cl{f^{-1}(f(x))}\,\cap\, \cl{f^{-1}(f(y))}\neq \emp$.
\end{enumerate}
\end{proposition}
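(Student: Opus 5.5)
The plan is to prove item (1) by a geometric case analysis on which pieces of $X$ contain $x$ and $y$. Items (2) and (3) will then follow from (1) using Proposition~\ref{prop:prop-of-suit} and the cp-property of $f$ (Proposition~\ref{prop:cpmorphBasic}).

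\emph{Geometry.} The disk $X_i$ lies in the plane $x=Li$. The handle $H_i$ is the open segment from $y_i\in X_i$ to $z_i\in X_{i+1}$, so every point of $H_i$ has first coordinate strictly between $Li$ and $L(i+1)$. Since the distance between two points of $\mR^3$ is at least the difference of their first coordinates, this gives the following.
\begin{itemize}
\item Points of $X_i$ and $X_j$ with $i\neq j$ are at distance at least $L$, so they are never neighbours.
\item A point of $H_i$ and a point of $X_j$ with $j\notin\{i,i+1\}$ are at distance greater than $L$.
\item Points of $H_i$ and $H_j$ with $|i-j|\geq 2$ are at distance greater than $L$.
\end{itemize}
Hence a pair of neighbours falls into one of these cases: both in the same $X_i$; both in the same $H_i$; one in $X_i$ and the other in $H_i$ or $H_{i-1}$; one in $H_{i-1}$ and the other in $H_i$.

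\emph{Item (1).} The key observation is $\ups D_i\subseteq W_i\cap W_{i+1}$, since $D_i\subseteq V_i$ and $W_i,W_{i+1}$ are upsets. Therefore $f[H_{i-1}]$, $f[X_i]$ and $f[H_i]$ are all contained in $W_i=\ups C_i$, because $f_i$ maps onto $F_i$ and $g_{i-1},g_i$ map onto $Q_{i-1},Q_i$. In every case above except ``both in $H_i$'', both $f(x)$ and $f(y)$ lie in $W_i$ for a suitable $i$, so any $c\in C_i$ lies in $\ds f(x)\cap\ds f(y)$. If both points are in $H_i$, then both images lie in $\ups D_i$, and $c=d_i$ works.

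\emph{Items (2) and (3).} For (2), take $c\in\ds f(x)\cap\ds f(y)$. Then $f(x),f(y)\in\ups c$, which is an $S$-clique by Proposition~\ref{prop:prop-of-suit}(2), so $f(x)\,S\,f(y)$. For (3), $f$ is surjective, so pick $p$ with $f(p)=c$. By Proposition~\ref{prop:cpmorphBasic}, $\cl f^{-1}(f(x))=f^{-1}[\ds f(x)]$, and this set contains $p$; the same holds for $y$. So $p$ lies in both closures.

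The only step needing care is the geometric exclusion of the remaining configurations. In particular, $X_i$ and $X_{i+1}$ are at distance exactly $\geq L$, which the strict inequality in the definition of neighbours rules out. Handles must also be kept away from non-adjacent disks, which holds because the open segment has first coordinate strictly inside $(Li,L(i+1))$.
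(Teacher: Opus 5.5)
Your proof is correct and follows the paper's argument. The paper likewise notes that neighbouring points are mapped into a single quasitree $F_i$ on $W_i=\ups C_i$, since adjacent disk/handle and handle/handle pairs land in one $W_i$; it then gets (2) from $F_i$ being an $S$-clique by Proposition~\ref{prop:prop-of-suit}, and (3) from the cp-property of $f$. Your first-coordinate case analysis supplies the geometric detail the paper leaves to ``due to the construction''; treating the case of both points in $H_i$ separately is harmless but unnecessary, since $\ups D_i\subseteq W_i$.
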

\begin{proof}
Due to the construction, a pair of adjacent handles, as well as a pair of adjacent handle and a disk, is mapped to a quasitree. Hence,
if $\d(x,y)<L$, then $f(x)$ and $f(y)$ belong to the same quasitree $F_i$. 

That $f(x)S  f(y)$ follows from the fact that $F_i$ is an $S$-clique by Proposition \ref{prop:prop-of-suit}.

The third statement follows from the first, since $f$ is a cp-morphism.
\end{proof}
\ISLater{Nov. 13 2024: it seems that there is an extra axiom that is valid in this structure. It is unclear if it pertains to compactness or some other property. Making disks open does not seem to help.}

\ISLater{March 3, 2026: Our disks-handles model is 2-locally component connected (pinning out a point in a neighborhood gives at most two connected components)! This is known to be modally expressible in the language of the Cantor's derivative for $\Di$; but of course it indicates the source of our problems!  
 }

\subsection{Altering the metric} \label{sub:spaceJ}
Let $\d$ denote the Euclidean distance in $\mR^3$.  
Our goal is to define jumps $\j$ in $X$ in such a way that   
$\vf$ is satisfiable in $(X,\dj)$.

\smallskip 

Let $B(x,\delta)$   denote the open 
ball
in $X$ centered at $x$ of radius $\delta$ (w.r.t. the Euclidean distance).

\def\K{\frac{1}{L}}
Let $x\in X$. We put
\begin{equation}\label{eq:delta-x}
    \delta(x)=\sup\{\delta<\K  \mid
B(x,\delta)\subseteq f^{-1}[\ups f(x)]\}.
\end{equation} 
Since $f^{-1}[\ups f(x)]$ is open, $\delta(x)>0$. 

For $b\in W$, we define $y_b$ and $\delta(b)$. 
Let $i$ be the least such that $b \in W_i$.  Take $y_b$ in $f_i^{-1}(b)$, which is not a limit of a handle (notice that if $y\in f_i^{-1}(b)$ is a limit of a handle, then $b=d_i$ or $b=d_{i-1}$; in both cases, $\ds b\cap W_i$ contains the minimal points in $W_i$, which are distinct from $b$; we have
$\cl f_i^{-1}(b)=f_i^{-1}[\ds b\cap W_i]\neq f_i^{-1}(b)$, so $f_i^{-1}(b)$ is infinite.)

Let $\delta(b)$ be the supremum of $\delta$ such that $\delta<\delta(y_b)$ and $B(y_b,\delta)$
does not contain limits of handles.
Let $\delta_F= \frac{1}{2}\min\{\delta(b)\mid b\in W\}$.  So we have
   \begin{equation}\label{eq:delta-y}
     f[B(y_b,\delta_F)]\subseteq \ups b
   \end{equation}
   Also, notice that 
   $B(y_b,\delta_F)$ and $B(y_c,\delta_F)$ are disjoint for distinct $b,c\in W$.  

\ISLater{\IS{How do we use it?} {\color{red}[Maybe we don't]}}

Assume that for $x\in X$ we have $f(x)\, S\, b$.
If $\d(x,y_b)\geq 1$,  then we add the jump
$$\j(x,y_b)=1-\frac{1}{2}{\min\{\delta(x),\delta_F\}}.$$
Notice that $\j(x,y_b)>\frac{1}{2}$. 
\ISLater{This $\frac{1}{2}$ seems to be overkill...   }

\ISLater{
Old comment (for two modalities):  

To take fixed witnesses $y_b$ is provably WRONG (p.7)
}

Now consider the metric space $(X,\dj)$. 
\begin{proposition}
$(X,\dj)$ is connected. 
\end{proposition}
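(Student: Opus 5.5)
The plan is to show that the topology induced by $\dj$ coincides with the Euclidean subspace topology $\tau$ on $X$. Connectedness is a topological property, and $(X,\tau)$ was already observed to be connected in Subsection~\ref{sub:spaceX}, so this gives the claim at once. The natural tool is Lemma~\ref{lem:JumpsTop}: it suffices to check that $\inf\{\j(x,y)\mid (x,y)\in E\}>0$, where $E$ is the set of pairs on which a jump was defined.

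First I will pin down $E$ and check that $\j$ is a legitimate jump map in the sense of Definition~\ref{def:jumps}. As written, jumps are introduced as $\j(x,y_b)$ for pairs with $f(x)\,S\,b$ and $\d(x,y_b)\geq 1$. To make $\j$ symmetric I will set $\j(y_b,x)=\j(x,y_b)$. One subtlety is a pair of the form $(y_c,y_b)$, which could receive two candidate values, one from each direction. In that case I will take the minimum of the two, or, equivalently, let the weight use the smallest available jump. Any such convention keeps all values inside the set of numbers produced by the formula $1-\frac12\min\{\delta(x),\delta_F\}$.

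The key estimate is then immediate. By definition $\delta(x)<\K$, and also $\delta_F\leq \frac12\delta(b)<\frac12\K$, so
\[
\j(x,y_b)=1-\tfrac12\min\{\delta(x),\delta_F\}>1-\tfrac{1}{2L}>\tfrac12 .
\]
Hence every jump exceeds $\frac12$, the infimum over $E$ is at least $\frac12>0$, and Lemma~\ref{lem:JumpsTop} shows that $\dj$ is a metric topologically equivalent to $\d$. Therefore $(X,\dj)$ is connected.

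The main obstacle, modest as it is, is the bookkeeping needed to make $\j$ a well-defined symmetric function on a symmetric set $E$; the essential content is the uniform lower bound $\frac12$. If I wanted to avoid relying on the lemma, I would argue directly. Any $\dj$-ball of radius $\eps<\frac12$ around $x$ agrees with the Euclidean ball of radius $\eps$: a path of weight less than $\eps$ cannot use a jump, since each jump costs more than $\frac12$, so its weight is a sum of Euclidean distances, which is at least $\d(x,y)$ by the triangle inequality. Together with $\dj\leq\d$, this shows that balls of radius below $\frac12$ coincide, so the two topologies agree.
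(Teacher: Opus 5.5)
Your proposal is correct and follows the paper's proof: the paper gets the claim immediately from Lemma~\ref{lem:JumpsTop}, using the bound $J(x,y_b)>\frac{1}{2}$ noted just after the jumps are defined, together with the connectedness of $(X,\tau)$ observed in Subsection~\ref{sub:spaceX}. Two small remarks: your symmetrization bookkeeping is harmless, since on a pair $(y_c,y_b)$ both candidate values coincide and equal $1-\frac{1}{2}\delta_F$; and $\delta(x)<\frac{1}{L}$ should read $\delta(x)\leq\frac{1}{L}$, which still gives $J\geq 1-\frac{1}{2L}>\frac{1}{2}$.
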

\begin{proof}
Follows from Lemma \ref{lem:JumpsTop} immediately.  
\end{proof}

\begin{proposition}
$f:(X,\dj)\toto  (W,\preceq)$ is a cp-morphism.
\end{proposition}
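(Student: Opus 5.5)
The plan is to reduce the statement to Proposition~\ref{prop:cpmorphBasic} by showing that the topology of $(X,\dj)$ is the same as the Euclidean topology $\tau$ on $X$. Being a cp-morphism is a purely topological property: the condition $\cl f^{-1}(a)=f^{-1}[\ds a]$ involves only the closure operator of the space, and surjectivity does not depend on the metric at all. So once the two topologies agree, the closure operators agree, and the identity $\cl f^{-1}(a)=f^{-1}[\ds a]$ for every $a\in W$ carries over verbatim from Proposition~\ref{prop:cpmorphBasic}.

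For the equality of topologies we invoke Lemma~\ref{lem:JumpsTop}, so the only thing to check is its hypothesis $\inf\{\j(x,y)\mid (x,y)\in E\}>0$.

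By construction, a jump is introduced only for pairs $(x,y_b)$, together with their symmetric counterparts $(y_b,x)$, and always with value
\[
\j(x,y_b)=1-\tfrac{1}{2}\min\{\delta(x),\delta_F\}.
\]
Since $\delta(x)<\frac{1}{L}<\frac12$, every jump satisfies $\j(x,y_b)>\frac12$, as already noted after the definition. Hence the infimum is at least $\frac12>0$.

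The one point that needs a little care is symmetry of the jump function, since Definition~\ref{def:jumps} requires $\j$ to be symmetric. The jump assigned to $(x,y_b)$ should be read as also assigned to $(y_b,x)$ with the same value. If a pair arises twice, say as $(x,y_b)$ and as $(y_b,y_c)$ with $x=y_c$, we take the smaller of the two values. Any such choice still keeps all values above $\frac12$, so the hypothesis of Lemma~\ref{lem:JumpsTop} holds regardless of how these coincidences are resolved.

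With the hypothesis verified, Lemma~\ref{lem:JumpsTop} gives that $\dj$ is a metric inducing $\tau$, and Proposition~\ref{prop:cpmorphBasic} then yields that $f:(X,\dj)\toto(W,\preceq)$ is a cp-morphism. No step is a genuine obstacle; the only subtlety is the bookkeeping of the symmetric extension of $\j$ just described.
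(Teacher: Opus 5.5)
Your proposal is correct and follows the paper's own argument: the paper says only that the claim follows immediately from Lemma~\ref{lem:JumpsTop} (all jumps exceed $\frac12$, so $\dj$ induces the Euclidean topology) together with Proposition~\ref{prop:cpmorphBasic}, which is exactly your reduction. Your remarks on the symmetric extension of $\j$ are harmless bookkeeping that the paper leaves implicit. Strictly, one only has $\delta(x)\le\frac1L$ rather than $<$, but this changes nothing.
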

\begin{proof}
Follows from Lemma \ref{lem:JumpsTop} and Proposition \ref{prop:cpmorphBasic} immediately.  
\end{proof}

Now consider the distance relation on $(X,\dj)$: 
\[
x\,S^{(\j)}\, y\,\,\Leftrightarrow\,\, \dj(x,y)<1.
\]  

\begin{proposition} \label{prop:jump-p-morph}
$f:(X,S^{(\j)})\toto  (W,S)$ is a p-morphism.
\end{proposition}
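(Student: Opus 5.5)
We have to check two conditions: the forth condition ($x\,S^{(\j)}\,y$ implies $f(x)\,S\,f(y)$) and the back condition (if $f(x)\,S\,b$ then there is $y$ with $x\,S^{(\j)}\,y$ and $f(y)=b$). Surjectivity of $f$ is already known from Proposition \ref{prop:cpmorphBasic}.

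\emph{Back condition.} Let $f(x)\,S\,b$. If $\d(x,y_b)<1$, then already $\dj(x,y_b)\le \d(x,y_b)<1$, and $f(y_b)=b$, so we are done. Otherwise the jump $\j(x,y_b)=1-\frac12\min\{\delta(x),\delta_F\}<1$ was added. Hence $w(x,y_b)<1$, and $\dj(x,y_b)<1$. In both cases $y=y_b$ is a witness.

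\emph{Forth condition.} This is the main step. Suppose $\dj(x,y)<1$, and fix a path $\tau=(x_0,\dots,x_m)$ from $x$ to $y$ with $w(\tau)<1$. Each jump has weight $>\frac12$, so $\tau$ uses at most one jump edge. We may also assume $\tau$ uses at most one jump, and in that case the remaining non-jump edges have total Euclidean length less than $\frac12\min\{\delta(u),\delta_F\}$, where $(u,y_b)$ is the jump used.

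\emph{No jump used.} Then $\d(x,y)\le w(\tau)<1<L$ by the triangle inequality. So $x,y$ are neighbours, and Proposition \ref{prop:neighbours}(2) gives $f(x)\,S\,f(y)$.

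\emph{Exactly one jump used.} Write the path as $x\leadsto u$, then the jump $(u,y_b)$ (or its reverse), then $y_b\leadsto y$. The Euclidean portions satisfy $\d(x,u)+\d(y_b,y)<1-\j(u,y_b)=\frac12\min\{\delta(u),\delta_F\}$. By symmetry of the jump we may orient it so that $u$ lies on the $x$ side.

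\begin{itemize}
\item Since $\d(x,u)<\delta(u)$, the definition \eqref{eq:delta-x} gives $f(x)\in\ups f(u)$, i.e.\ $f(u)\preceq f(x)$.
\item Since $\d(y_b,y)<\delta_F$, \eqref{eq:delta-y} gives $b\preceq f(y)$.
\item The jump was added only when $f(u)\,S\,b$.
\end{itemize}

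Proposition \ref{prop:prop-of-suit}(3), applied to $f(u)\preceq f(x)$, $b\preceq f(y)$ and $f(u)\,S\,b$, then yields $f(x)\,S\,f(y)$.

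The delicate point is justifying the restriction to paths with at most one jump, and checking the bound on the Euclidean part. Two jumps would give weight $>1$, which excludes that case. With one jump, the Euclidean remainder is strictly less than $1-\j=\frac12\min\{\delta(u),\delta_F\}$, exactly as required. One must also check that the ball estimates apply to Euclidean distance in $X$ and not to distance along the path. This holds because the Euclidean triangle inequality bounds $\d(x,u)$ by the sum of the Euclidean edge weights, and non-jump edges carry weight equal to $\d$ (an edge is either a jump used or has weight $\d$).
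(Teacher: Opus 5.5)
Your proposal is correct and follows essentially the same route as the paper. The back condition uses the jumps to the fixed witnesses $y_b$. The forth condition rests on the observation that a path of weight less than $1$ contains at most one jump; it then uses Proposition~\ref{prop:neighbours} in the jump-free case, and \eqref{eq:delta-x}, \eqref{eq:delta-y} together with Proposition~\ref{prop:prop-of-suit}(\ref{item3:prop:prop-of-suit}) in the one-jump case. One small point of phrasing: the step ``orient the jump so that $u$ lies on the $x$ side'' is really justified by reversing the path and using the symmetry of $S$, not the symmetry of the jump. The paper leaves this implicit as well.
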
 
\begin{proof}
That $f$ satisfies the back property of $p$-morphism 
is immediate from the definition of $\j$: if $f(x)\,S\,b$ 
and $\d(x,y_b)\geq 1$,
we added the jump $\j(x,y_b)$.

To check the relational homomorphism property, assume 
$\dj(x,y)<1$. We aim to show that $f(x)Sf(y)$.

Let $w$ be the weighting function induced by $\j$ on $(X,\d)$ (Definition \ref{def:jumps}).
Then for some path $\tau=(x_0,\ldots,x_m)$ from $x$ to $y$ we have 
$\sum_{i<m} w(x_i,x_{i+1})<1$.

Clearly, this path has at most one jump, since any jump value is greater than $\frac{1}{2}$. 

If there are no jumps, then $d(x,y)<1$, and hence $x$ 
and $y$ are neighbours, and the claim follows from Proposition \ref{prop:neighbours}.

Assume that $\tau$ has one jump $(x_k,x_{k+1})$.  Then 
$x_{k+1}$ (or $x_{k}$) is $y_b$ for some $b$. It follows that 
both $\sum_{i<k} w(x_i,x_{i+1})$ 
and
$ \sum_{k+1\leq i<m} w(x_i,x_{i+1})$  are less than ${\min\{\delta(x_k),\delta_F\}}$.  
At the same time, these initial and final parts of $\tau$ have no jumps, and we have  $d(x,x_k)\leq \sum_{i<k} w(x_i,x_{i+1})$ and 
$d(x_{k+1},y)\leq \sum_{k+1\leq i<m} w(x_i,x_{i+1})$. 
\ISLater{$1/2$? Ok, $\leq 1/2$, so ``less then''. But still, $\sum_{i<m} w(x_i,x_{i+1})<1$, not $\leq$. Very tempting to remove $\frac{1}{2}$ from jump def. }

Now it follows from \eqref{eq:delta-x} and \eqref{eq:delta-y}  that 
$f(x_k) \preceq f(x)$ and $f(x_{k+1}) \preceq f(y)$. 
We also have $f(x_k)Sf(x_{k+1})$ due to the construction of jumps. 
By Proposition \ref{prop:prop-of-suit}\eqref{item3:prop:prop-of-suit}, it follows that 
$f(x)S f(y)$.   
\end{proof}

From the p-morphism lemma \cite[Section 3.3]{BDV} and cp-morphism lemma \cite[Lemma 15]{Shehtman99}, it follows that $\vf$ is satisfiable in $(X,d_\j)$.  
Hence, we have
\begin{theorem}
In the language with the topological modality, universal modality, and a single distance modality,    
the logic of all connected, and also of all  connected and compact metric spaces is  $\LCS$. 
\end{theorem}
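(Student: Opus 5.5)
The proof has two halves: soundness, and completeness with respect to the narrower class of connected compact metric spaces. Since that class is contained in the class of all connected metric spaces, it suffices to show that every connected metric space validates $\LCS$, and that every $\LCS$-consistent formula is satisfiable in some connected compact metric space.

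\emph{Soundness.} By \cite{Wolter2005}, $\LMetrAx{\{1\},\Di}$ is the $(\{1\},\Di)$-logic of all metric spaces, so its axioms hold everywhere. The remaining axiom $\ACon_\Di$ expresses topological connectedness \cite{Shehtman99}, so it is valid in every connected space. Hence every connected metric space validates $\LCS$.

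\emph{Completeness.} Let $\vf$ be $\LCS$-consistent. By Corollary~\ref{cor:suitable}, $\vf$ is satisfied in a suitable frame $F=(W,\preceq,S,W\times W)$. From $F$ we build the space $X\subseteq\mR^3$ of Subsection~\ref{sub:spaceX}, consisting of disks $X_i$ and handles $H_i$, together with the map $f$. By Proposition~\ref{prop:cpmorphBasic}, $f$ is a cp-morphism onto $(W,\preceq)$. We then alter the metric by the jumps of Subsection~\ref{sub:spaceJ}. All jump values exceed $\frac12$, so Lemma~\ref{lem:JumpsTop} applies. It gives that $d_\j$ is a metric inducing the same topology as the Euclidean one. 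Consequently $(X,d_\j)$ is connected and $f$ remains a cp-morphism. Proposition~\ref{prop:jump-p-morph} shows that $f$ is also a p-morphism from $(X,S^{(\j)})$ onto $(W,S)$.

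We combine these facts by induction on formulas. Define the valuation on $X$ by pulling back the valuation on $F$ along $f$. The step for $\Di$ is the cp-morphism lemma \cite[Lemma 15]{Shehtman99}. The step for $\Di_1$ is the p-morphism lemma \cite[Section 3.3]{BDV}. The step for $\EE$ holds because $f$ is surjective. Therefore $\vf$ is satisfied in $(X,d_\j)$.

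\emph{Compactness.} The main obstacle is that the handles $H_i$ are open intervals, so $X$ is bounded but not closed, hence not compact. My first attempt would be to pass to the closure of $X$. The endpoints $y_i,z_i$ already lie in the closed disks, so the closure of $X$ in $\mR^3$ is $X$ itself; it suffices to observe that each closed interval $\overline{H_i}$ meets the disks only in $\{y_i,z_i\}$. Then $X=\bigcup_i X_i\cup\bigcup_i\overline{H_i}$ is a finite union of compact sets, and is therefore closed and bounded in $\mR^3$, hence compact. Since Lemma~\ref{lem:JumpsTop} preserves the topology, $(X,d_\j)$ is compact as well.

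The delicate point is whether the proof of Proposition~\ref{prop:cpmorphBasic} really used only that $X$ is closed. That proof treats closed subsets of handles containing their endpoints, which is exactly what makes the handle closures behave correctly. If some gap remains, I would instead replace each $H_i$ by a closed arc whose interior is mapped onto $Q_i$ by Lemma~\ref{lem:ShehtmanLem16}, with its endpoints glued into the disks at preimages of $d_i$, and recheck the neighbour argument of Proposition~\ref{prop:neighbours}.
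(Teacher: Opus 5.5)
Your proposal is correct and follows the paper's own argument. That argument runs as follows: a suitable frame from Corollary~\ref{cor:suitable}, then the disks-and-handles space with the cp-morphism $f$, then the jump metric $d_J$, which has the same topology by Lemma~\ref{lem:JumpsTop}, and then the p-morphism of Proposition~\ref{prop:jump-p-morph}, all combined by one induction on formulas. Your explicit compactness check makes precise what the paper leaves implicit, but your opening claim that $X$ is ``not closed'' is mistaken: the endpoints $y_i,z_i$ of the open handles already lie in the closed disks, so the paper's $X$ is closed and bounded in $\mR^3$, and the fallback modification of the handles is unnecessary.
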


\section{Concluding remarks} 

The construction given in the previous section does not seem to be applicable to the case of more than one metric modality. 
The main (and the only) problem is the relational homomorphism condition in the proof of Proposition 
\ref{prop:jump-p-morph}: our argument that any path was augmented with at most one jump 
does not extend to the case of several distance relations.  
Hence, the  problem of axiomatization of connected metric spaces  (\cite{Wolter2005,KuruczWZ05}) 
in the language of the topological closure $\Di$, universal modality, and multiple metric modalities remains open.

Another natural question is the axiomatization 
of metric spaces that are $a$-connected for all positive $a$. 
Such spaces are said to be {\em well-chained}  \cite{TopologicalAnalysis-Whyburn}.
We conjecture that for any set $\Al$ of parameters, the $\Alp$-logic of the class of all well-chained metric spaces is 
$\LMetrAx{\Alp}+\{\ACon_a\mid a\in \Alp\}$. 

\medskip
These are only two of many open problems related to modal axiomatization of metric spaces. 
Of particular interest are distance logics of Euclidean spaces.
Even the unimodal language with a single distance modality turns out to be surprisingly expressive: for example, it distinguishes between different dimensions and between the rational and real lines \cite{RTG-DistanceLogics}. 
However, no complete axiomatizations are known for the corresponding  logics. 

\section*{Acknowledgements}  
We are grateful to anonymous reviewers for their helpful comments on an earlier version of this text.

\smallskip \noindent
This work was supported by NSF Grant DMS - 2231414.

\section*{Appendix}

\begin{proof}[Proof of Lemma \ref{lem:space-from-frame}] 
Very close to the argument in the proof of \cite[Theorem 4.4]{Kutz2007}.

The case $\Alp=\emp$ is trivial. Assume  $\Alp\neq \emp$.
Let $R=\bigcup_{l\in \Alp} R_l$. Since $(X, (R_l)_{l\in \Alp})$ is point-generated, $(X,R)$ is connected.

Let $$k=\frac{\max{\Alp}+1}{\min{\Alp}},$$
and let  $\Alp^{<2k}$ 
be the set of tuples over $\Alp$ whose length is less than $2k$. 
Consider the set $$S=\{\sumv-l\mid \text{$\vct\in A^{<2k}$, $l\in \Alp$, and $\sumv> l$}\}.$$
Clearly, $S$ is a finite set of positive numbers. Let
\begin{equation*}\label{eq:epsilon}
\eps=\frac{\min (S\cup \Alp)}{2k}.
\end{equation*}
Since $k>1$ and $\min (S\cup \Alp)\leq \min A$, we have
\begin{equation}\label{eq:metr:eps-small}
\eps<\frac{\min \Alp}{2}.
\end{equation}

\hide{
We have:
\begin{equation}\label{eq:metr:m-small}
\text{If $\vct\in A^m$, $r\in \Alp$ and $\sumv-m\eps\leq r$, then $m<2k$.}
\end{equation}
Indeed, in view of \eqref{eq:metr:eps-small}, we have
$$
m\frac{\min A}{2}<m(\min A-\eps)\leq
\sumv-m\eps\leq
r\leq \max A,
$$
and hence we have
$$m<2\frac{\max{A}}{\min{A}}<2k.
$$
We also have:
\begin{equation}\label{eq:metr:main}
\text{If $m<2k$, $\vct\in A^m$, $r\in \Alp$ and $\sumv\geq r$, then $\sumv-m\eps\geq r$.}
\end{equation}
Indeed, $\sumv-r\in S$, and hence
$$
\sumv-r\geq \min S\geq  \min (S\cup A)=2k\eps>m\eps,
$$
so $\sumv-m\eps\geq r$.
}

Let $e=(a,b) \in R$, $a\neq b$. Define the {\em index $\idx(e)$ of $e$} as $\min\{l\in \Alp \mid e\in R_l\}$,
and its {\em weight} $w(e)$ as $\idx(e)-\eps.$
By the definition, we put $w(a,a)=0$.

For an $R$-path $\tau=(a_0,\ldots,a_m)$,
put
$w(\tau)=\sum_{i<m} w(a_i,a_{i+1})$.

Finally, for $a,b \in X$, we define
\begin{equation}\label{eq:distance-for-postgraph}
d(a,b)=\inf\{w(\tau) \mid \tau \text{ is an $R$-path from $a$ to $b$}\}.
\end{equation}

It is easy to see that $d$ is a metric on $X$. In particular, since $(X,R)$ is connected, $d(a,b)$ is defined for all $a,b\in X$.
That $d(a,b)=d(b,a)$ is immediate from the symmetry of relations.
The triangle inequality is straightforward from \eqref{eq:distance-for-postgraph}.
It is also clear from the definition of $w$ that $d(a,b)\geq \min \Alp-\eps$ for distinct $a,b$.

It remains to show that for $l\in \Alp$,
\begin{equation}\label{eq:metr-induced}
aR_l b \text{ iff } d(a,b)<l.
\end{equation}

The `only if' follows from \eqref{eq:distance-for-postgraph} and the definition of weight.

For the `if' direction, assume that $\displaystyle d(a,b)<l\in\Alp$. Then $a$ and $b$ are connected by an $R$-path $\tau=(a_0,\ldots,a_m)$ with
$w(\tau)<l$. The case $a=b$ is trivial, so let us assume that $m>0$.
We can also assume that $\tau$ is simple, that is all $a_i$ are distinct.
For $i<m$, put $l_i=\idx(a_i,a_{i+1})$, and consider $\vct=(l_0,\ldots, l_{m-1})$.
We have
\begin{equation}\label{eq:metr:weight-eps}
w(\tau)=\sum_{i<m}{(l_i-\eps)}=\sumv-m\eps<l.
\end{equation}

We claim that $m<2k$.
Indeed, in view of \eqref{eq:metr:eps-small}, we have
$$
m\frac{\min \Alp}{2}<m(\min \Alp-\eps)\leq
\sumv-m\eps=w(\tau)<
l\leq \max \Alp,
$$
and hence we have $m<2\frac{\max{\Alp}}{\min{\Alp}}<2k$.

Now we claim that
\begin{equation}\label{eq:metr:main}
\sumv\leq l.
\end{equation}
For the sake of contradiction, assume
$\sumv> l$. 
We have $\sumv-l\in S$, and hence
$$
\sumv-l\geq \min S\geq  \min (S\cup A)=2k\eps>m\eps,
$$
so $\sumv-m\eps> l$. This contradicts \eqref{eq:metr:weight-eps}, which  proves \eqref{eq:metr:main}.

We have $(a,b)\in R_{\vct}$. Since $F$ is a
$\LMetrAx{\Alp}$-frame and in view of \eqref{eq:metr:main}, $(a,b)\in R_{l}$. This completes the proof of \eqref{eq:metr-induced}.
\end{proof}

\ISLater{ $\Alp^{<2k}$  might be confusing, since $k$ is real} 

\ISLater{There must be a common term for $d$. ``Weighted shortest-path distance''?} 

\ISLater{Older staff:

\begin{corollary}[Conservativity]\label{lem:conservativity}
Let $\AlpB\subseteq \Alp$.
Then $\LMetrAx{\Alp}\cap \Fms_\AlpB=\LMetrAx{\AlpB}.$
\end{corollary}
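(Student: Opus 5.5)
The inclusion $\LMetrAx{\AlpB}\subseteq\LMetrAx{\Alp}\cap\Fms_\AlpB$ is immediate. Every axiom of $\LMetrAx{\AlpB}$ listed in Definition~\ref{def:metr-axioms} is also an axiom of $\LMetrAx{\Alp}$: it is one of the formulas (i)--(iii) for parameters in $\AlpB\subseteq\Alp$, and tuples over $\AlpB$ are tuples over $\Alp$. Moreover, the normal-logic closure rules stay within $\AlpB$-formulas. So the work is in the converse. I will argue semantically: for an $\AlpB$-formula $\vf\notin\LMetrAx{\AlpB}$, I will build a metric space whose metric $\AlpB$-frame refutes $\vf$. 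Soundness of $\LMetrAx{\Alp}$ on metric $\Alp$-frames then gives $\vf\notin\LMetrAx{\Alp}$, since the truth of an $\AlpB$-formula in $F_\Alp(X,d)$ depends only on the $\AlpB$-reduct, and that reduct is $F_\AlpB(X,d)$.

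\emph{Reduction to a finite alphabet.} Let $\AlpB_0\subseteq\AlpB$ be the finite set of parameters occurring in $\vf$. Since $\LMetrAx{\AlpB_0}\subseteq\LMetrAx{\AlpB}$, we have $\vf\notin\LMetrAx{\AlpB_0}$. If $\AlpB_0=\emp$, the claim reduces to the known completeness of $\LS{5}$ for the universal modality, and one-point spaces suffice. So assume $\AlpB_0\neq\emp$.

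\emph{A finite countermodel.} Next I need a finite $\LMetrAx{\AlpB_0}$-frame refuting $\vf$. I will rerun the proof of Lemma~\ref{thm:poly:fmp-top-manyD} with $\Di$ and the connectedness axiom dropped. Take the generated submodel of the canonical model, where canonicity comes from the Sahlqvist form of the axioms. Filter it through $\sim_\Delta$ with $\Delta=\{\Di_\vect{r}\psi\mid \vect{r}\in V,\ \psi\in\Sub\vf\}$. Then replace the minimal filtered relations $E_r$ by the $\AlpB_0$-closure of their symmetric closures. Lemma~\ref{lem:poly:Pclosure} and the argument for \eqref{eq:poly:filtrGenJohn}, with the $\preceq$-steps now trivial, show that this is a filtration validating $\LMetrAx{\AlpB_0}$. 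This yields a finite frame $F=(X,(R_l)_{l\in\AlpB_0},X\times X)$ refuting $\vf$.

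\emph{The obstacle: connectedness.} The step I expect to need care is that Lemma~\ref{lem:space-from-frame} requires $(X,(R_l))$ to be point-generated, and $F$ need not be, since $\LMetrAx{\AlpB_0}$ has no connectedness axiom. Restricting to a component is not allowed, because that would change the meaning of $\EE$. Instead I will treat the components $X_1,\dots,X_k$ of $(X,\bigcup_l R_l)$ separately. Each restricted frame is still an $\LMetrAx{\AlpB_0}$-frame, since the axioms are universal conditions on the $R_l$ and no $R_l$-edge crosses between components. Lemma~\ref{lem:space-from-frame} then gives a metric $d_j$ on each $X_j$ inducing exactly the relations $R_l$ restricted to $X_j$. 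I then set $d(x,y)=M$ for $x,y$ in different components, with a fixed $M>\max\AlpB_0$ chosen also larger than every $\operatorname{diam} X_j$, which are finite because the $X_j$ are finite. With this choice the triangle inequality holds, using $M\ge$ each $d_j$-value. For $l\in\AlpB_0$ we get $d(x,y)<l$ iff $x R_l y$, because cross-component pairs are at distance $M$ and are unrelated. Hence $F$ is the $\AlpB_0$-metric frame of $(X,d)$.

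\emph{Conclusion.} Since $\vf$ is refuted in $F_{\AlpB_0}(X,d)$, it is refuted in $F_\Alp(X,d)$, which validates $\LMetrAx{\Alp}$. Therefore $\vf\notin\LMetrAx{\Alp}$, which proves $\LMetrAx{\Alp}\cap\Fms_\AlpB\subseteq\LMetrAx{\AlpB}$.
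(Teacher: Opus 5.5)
Your argument is correct and is essentially the intended one: the paper gives no separate proof, and the corollary is meant to follow from metric completeness (a Kripke countermodel over the finitely many parameters of $\vf$, turned into a metric space by Lemma~\ref{lem:space-from-frame}) together with soundness of $\LMetrAx{\Alp}$ on metric frames. Your gluing of the $\bigcup_l R_l$-components at a large distance $M$ is the right way to keep the meaning of $\EE$, since passing to a point-generated subframe would change it.

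The only slip is in the case $\AlpB_0=\emp$. One-point spaces do not suffice there: e.g.\ $\EE p\imp p$ is valid in every one-point frame but is not in $\LS{5}$. Instead, use a finite $\LS{5}$ countermodel (a finite cluster) with the discrete metric; this is also what your component construction gives when all components are singletons.
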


\hide{
\begin{problem}
For which infinite $\Alp$, the class of all metric $\Alp$-frames is first-order axiomatizable?
\IS{Most probably, no such $\Alp$ can be unbounded: metric frames should be connected w.r.t. $\bigcup_\Alp{R_l}$.}
\end{problem}

}

\bigskip

\noindent
\ISLater{ Dangerous bend! The following version of the Lemma might be wrong:

Let $(X, (R_l)_{l\in \Alp})$ be a point-generated $\LMetrAx{\Alp}$-frame.
The for every finite $\AlpB\subseteq \Alp$, there exists a metric $d$ on $X$ such that
$R_l=d_l$ for each $l\in \AlpB$.

}
} 
\bibliographystyle{eptcs}
\bibliography{metric}

\hide{
\newpage

\begin{center}
{\huge \bf{Fragmented staff: unsorted lemmas, ideas, wrong statements... }}
\end{center}

\begin{center}
\bf{Before Nov. 15}
\end{center}

\input{sections/storeroom/_storeroom}

}
 
\end{document}